\algnewcommand{\algorithmicgoto}{\textbf{goto}}
\algnewcommand{\Goto}[1]{\algorithmicgoto~\ref{#1}}
\algnewcommand{\algorithmicinitialize}{\textbf{initialize}}
\algnewcommand{\Initialize}{\algorithmicinitialize}
\algnewcommand{\algorithmicbreak}{\textbf{break}}
\algnewcommand{\Break}{\algorithmicbreak}
\declaretheorem[style=plain,numberwithin=section]{theorem}
\declaretheorem[style=plain,sibling=theorem]{corollary}
\declaretheorem[style=plain,sibling=theorem]{proposition}
\declaretheorem[style=definition,sibling=theorem]{definition}
\declaretheorem[style=definition,sibling=theorem]{example}
\declaretheorem[style=remark,sibling=theorem]{remark}
\newcommand{\set}[1]{\{#1\}}
\newcommand{\defas}{\coloneqq}
\newcommand{\asdef}{\eqqcolon}
\newcommand{\ceil}[1]{\lceil#1\rceil}
\newcommand{\Indicator}{\mathbf{1}}
\newcommand{\flip}{\textsc{Flip}}
\newcommand{\bb}{\mathbf{b}}
\newcommand{\Rationals}{\mathbb{Q}}
\newcommand{\Bools}{\mathbb{B}}
\newcommand{\Naturals}{\mathbb{N}}
\newcommand{\NumSys}[1]{\Bools_{#1}}
\newcommand{\distrib}[1]{\mathsf{#1}}
\newcommand{\bernoulli}{\distrib{Bernoulli}}
\newcommand{\uniform}{\distrib{Uniform}}
\begin{document}

\runningtitle{The Fast Loaded Dice Roller: A Near-Optimal Exact Sampler
    for Discrete Probability Distributions}

\twocolumn[
  \aistatstitle{
    The Fast Loaded Dice Roller: A Near-Optimal \\ Exact Sampler
    for Discrete Probability Distributions
  }
  \aistatsauthor{
    Feras A.~Saad
    \And Cameron E.~Freer
    \And Martin C.~Rinard
    \And Vikash K.~Mansinghka
  }
  \aistatsaddress{
    MIT EECS
    \And MIT BCS
    \And MIT EECS
    \And MIT BCS}
]

\begin{abstract}
This paper introduces a new algorithm for the
  fundamental problem of generating a random integer from a discrete
  probability distribution using a source of independent and unbiased
  random coin flips.
We prove that this algorithm, which we call the Fast Loaded Dice Roller (FLDR),
  is highly efficient in both space and time:
  \begin{enumerate*}[label=(\roman*)]
  \item the size of the sampler is guaranteed to be linear in the
    number of bits needed to encode the input distribution; and
  \item the expected number of bits of entropy it consumes per
    sample is at most 6 bits more than the
    information-theoretically optimal rate.
  \end{enumerate*}
We present fast implementations of the linear-time preprocessing
  and near-optimal sampling algorithms using unsigned integer
  arithmetic.
Empirical evaluations on a broad set of probability distributions
  establish that FLDR
  is 2x--10x faster in both preprocessing and sampling than
  multiple baseline algorithms,
  including the widely-used alias and interval samplers.
It also uses up to 10000x less space than the
  information-theoretically optimal sampler, at the expense of less
  than 1.5x runtime overhead.
\end{abstract}

\section{INTRODUCTION}
\label{sec:introduction}

The problem of generating a discrete random variable is as follows:
  given a probability distribution $p \defas (p_1, \dots, p_n)$ and
  access to a random source that outputs an independent stream of fair
  bits, return integer $i$ with probability $p_i$.
A classic theorem from~\citet{knuth1976} states that the most
  efficient sampler, in terms of the expected number of random bits
  consumed from the source, uses between $H(p)$ and $H(p) + 2$ bits
  in expectation, where $H(p) \defas \sum_{i=1}^{n}p_i\log(1/p_i)$ is the
  Shannon entropy of $p$.
This entropy-optimal sampler is obtained
  by building a decision tree using
  the binary expansions of the $p_i$.

Despite the fact that the \citeauthor{knuth1976} algorithm
  provides the most time-efficient sampler for any probability distribution,
  this paper shows that its construction can require
  exponentially larger space than the number of bits needed to encode
  the input instance $p$ and may thus be infeasible to
  construct in practice.
In light of this negative result, we aim to develop a sampling
  algorithm whose entropy consumption is close to the optimal rate and
  whose space scales polynomially.

This paper presents a new sampling algorithm where, instead
  of using an entropy-optimal sampler to simulate
  $(p_1,\dots,p_n)$ directly, we define a proposal distribution
  $(q_1, \dots, q_n, q_{n+1})$ on an extended domain whose probabilities
  $q_i$ are dyadic rationals that are ``close'' to the probabilities
  $p_i$ and then simulate the proposal with an
  entropy-optimal sampler followed by an accept/reject step.
We prove that this sampling algorithm,
  which we call the Fast Loaded Dice Roller (FLDR),
  is efficient in both
  space and time: its size scales linearly in
  the number of bits needed to encode the input instance $p$ and
  it consumes between $H(p)$ and $H(p) + 6$ bits in expectation, which
  is near the optimal rate and does not require exponential memory.

We present an implementation of FLDR using fast
  integer arithmetic and show empirically that it is
  2x--10x faster than several exact baseline samplers,
  and uses up to 10000x less space than the
  entropy-optimal sampler of~\citeauthor{knuth1976}.
To the best of our knowledge, this paper presents the first
  theoretical characterization and practical implementation of using
  entropy-optimal proposal distributions for accept-reject sampling, as
  well as benchmark measurements that highlight the space and runtime
  benefits of FLDR over multiple existing exact sampling algorithms.
A prototype implementation in C is released with the paper.

The remainder of this paper is structured as follows.
Section~\ref{sec:random-bit-model} formally introduces the random bit
  model of computation for studying the sampling
  algorithms used throughout the paper.
Section~\ref{sec:knuth-yao-complexity} establishes the worst-case
  exponential space of the entropy-optimal
  \citeauthor{knuth1976} sampler.
Section~\ref{sec:rejection-sampling} presents a systematic study of
  the space--time complexity of three common baseline rejection
  algorithms.
Section~\ref{sec:fast-loaded-dice-roller} presents FLDR and
  establishes its linear memory and near-optimal entropy consumption.
Section~\ref{sec:empirical-measurements} presents
  measurements of the preprocessing time, sampling time, and memory
  consumption of FLDR and demonstrates improvements over
  existing exact samplers.

\section{PRELIMINARIES}
\label{sec:random-bit-model}

\paragraph{Algebraic model}
Many algorithms for sampling discrete random
  variables~\citep{walker1977,vose1991,smith2002,bringmann2017}
  operate in a model of computation where the space--time complexity
  of both preprocessing and sampling are analyzed assuming a
  real RAM model~\citep{blum1998} (i.e., storing and arithmetically
  manipulating infinitely precise numbers can be done in constant
  time~\citep[Assumptions I, III]{devroye1986}).
Algorithms in this model apply a sequence of transformations to a
  uniform random variable $U \in [0,1]$, which forms the basic unit of
  randomness~\citep[Assumption II]{devroye1986}.
While often useful in practice, this model does not
  permit a rigorous study of either the complexity, entropy
  consumption, or sampling error of different samplers.
More specifically, real RAM sampling algorithms typically generate
  random variates which are only approximately distributed according
  to the target distribution when implemented on physically-existing
  machines due to limited numerical precision, e.g., IEEE
  double-precision floating-point~\citep{bringmann2013}.
This sampling
  error is challenging to quantify in
  practice~\citep{devroye1982,monahan1985}.
In addition, the real RAM model does not account for the complexity of drawing
  and manipulating the random variable $U$ from the underlying source
  (a single uniform random variate has the same amount of entropy
  as countably infinitely many such variates) and thus ignores a key design
  constraint for samplers.

\paragraph{Random bit model}
This paper focuses on exact sampling (i.e., with zero sampling error)
  in a word RAM model of computation where the basic unit of
  randomness is an independent, unbiased bit $B \in \set{0,1}$
  returned from a primitive operation \textsc{Flip}.
The random bit model is widely used, both in information
  theory~\citep{han1993}
  and in formal descriptions of sampling algorithms
  for discrete distributions
  that use finite precision arithmetic and random fair bits.
Examples include the
  uniform~\citep{lumbroso2013},
  discrete Gaussian~\citep{follath2014},
  geometric~\citep{bringmann2017},
  random graph~\citep{blanca2012},
  and general categorical~\citep{knuth1976,uyematsu2003} distributions.
The model has also been generalized to the setting of using
  a biased or non-i.i.d.\ source of coin
  flips for sampling~\citep{neumann1951,elias1972,blum1986,roche1991,peres1992,abrahams1996,pae2006,kozen2018}.

\paragraph{Problem Formulation}
  Given a list $(a_1, \dots, a_n)$ of $n$ positive integers which sum to
  $m$ and access to a stream of independent
  fair bits (i.e., $\flip$), sample integer $i$ with
  probability $a_i/m$ $(i=1,\dots,n)$.

Designing algorithms and data structures for this problem of ``dice rolling''
  has received widespread attention in
  the computer science literature; see~\citet{schwarz2011} for a survey.
We next describe a framework for describing the computational behavior
  of any sampling algorithm implemented in the random bit model.

\paragraph{Discrete distribution generating trees}
\citet{knuth1976} present a computational framework for expressing any
  sampling algorithm in the random bit model in terms of a
  (possibly infinite) rooted binary tree $T$, called a
  \textit{discrete distribution generating} (DDG) tree,
  which has the following properties:
  \begin{enumerate*}[label=(\roman*)]
      \item each internal node has exactly
        two children (i.e., $T$ is full); and
      \item each leaf node is labeled with one outcome from the set
        $\set{1,2,\dots,n}$.
  \end{enumerate*}
The algorithm is as follows: starting at the root, obtain a
  random bit $B \sim \flip$.
Proceed to the left child if $B=0$ and proceed to the right child if
  $B=1$. If the child node is a leaf, return the label assigned to
  that leaf and halt.
Otherwise, draw a new random bit $B$ and repeat the process.
For any node $x \in T$, let $l(x)$ denote its label and $d(x)$ its
  level (by convention, the root is at level 0 and all internal nodes
  are labeled $0$).
Since $\flip$ returns fair bits, the output probability distribution
  $(p_1, \dots, p_n)$ is
  \begin{align*}
  p_i \defas \mathbb{P}[T \mbox{ returns } i]
    = \sum\limits_{x \,\mid\, l(x) = i} 2^{-d(x)} && (i=1,\dots,n).
  \end{align*}
The number of coin flips $L_T$ used when simulating $T$
  is, in expectation, the average depth of the leaves, i.e.,
  \begin{align*}
  \mathbb{E}[L_T] = \sum_{x \,\mid\, l(x) > 0}d(x)2^{-d(x)}.
  \end{align*}
The operators $\mathbb{P}$ and $\mathbb{E}$ are defined over the
  sequence $\bb \in \set{0,1}^{\infty}$ of bits from the random
  source, finitely many of which are consumed during a halting
  execution (which occurs with probability one).
The following classic theorem establishes tight bounds on the minimal
  expected number of bits consumed by any sampling algorithm for a
  given distribution $p$, and provides an explicit construction
  of an optimal DDG tree.

\begin{theorem}[\citet{knuth1976}]
\label{thm:ddg-knuth-yao}
Let $p \defas (p_1, \dots, p_n)$, where $n > 1$.
Any sampling algorithm with DDG tree $T$ and output distribution
  $p$ whose expected number of input bits is minimal
  (among all trees $T'$ whose output distribution equals $p$)
  satisfies $H(p) \le \mathbb{E}[L_T] < H(p) + 2$.
These bounds are the tightest possible.
In addition, $T$ contains exactly 1 leaf node
  labeled $i$ at level $j$ if and only if $p_{ij} = 1$,
  where $(0.p_{i1}p_{i2}\dots)_2$ denotes the
  binary expansion of each $p_i$
  (which ends in $\bar{0}$ whenever $p_i$ is dyadic).
\end{theorem}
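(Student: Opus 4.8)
The plan is to prove the three assertions in turn: the entropy lower bound $H(p) \le \mathbb{E}[L_T]$, the upper bound $\mathbb{E}[L_T] < H(p)+2$ together with tightness, and the explicit leaf-level characterization via binary expansions. I would begin with the lower bound, which is really a statement about prefix-free codes. Each leaf $x$ of the DDG tree $T$ corresponds to a finite binary string (the sequence of coin flips reaching it) of length $d(x)$, and the collection of these strings is prefix-free because $T$ is full and execution halts upon reaching a leaf. Grouping leaves by their label $i$, the probability $p_i = \sum_{x \mid l(x)=i} 2^{-d(x)}$. I would then invoke a Kraft-type / grouping argument: think of the leaves as the letters of a code for an auxiliary random variable $X$ refining the output (the leaf itself), so that $\mathbb{E}[L_T] = H(\text{leaf})$ if and only if all leaf depths are "ideal," and in general $\mathbb{E}[L_T] \ge H(\text{leaf}) \ge H(p)$ by the data-processing / grouping inequality for entropy, since the map leaf $\mapsto$ label is deterministic. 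The cleanest route is Gibbs' inequality: for each leaf define $q_x \defas 2^{-d(x)}$; these sum to $1$ over all leaves (the tree is full and halts a.s.), so $\mathbb{E}[L_T] = \sum_x d(x) 2^{-d(x)} = \sum_x q_x \log(1/q_x)$, and comparing against the distribution that puts mass $p_i$ spread arbitrarily gives $\sum_x q_x \log(1/q_x) \ge -\sum_x q_x \log(p_{l(x)}/ (\#\text{leaves labeled } l(x)))$... — more simply, $\mathbb{E}[L_T] \ge H(p)$ follows because $H(\text{leaf}) \ge H(p)$ and $\mathbb{E}[L_T] \ge H(\text{leaf})$ by Gibbs applied to $(q_x)$ versus the true leaf probabilities $(2^{-d(x)})$, which are equal, forcing me instead to argue $\mathbb{E}[L_T] \ge H(p)$ directly. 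I expect this first step to require the most care in bookkeeping.

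For the upper bound and the explicit construction, I would proceed constructively: define $T$ by reading off the binary expansions $p_i = (0.p_{i1}p_{i2}\dots)_2$ and placing, for each $i$ and each $j$ with $p_{ij}=1$, one leaf labeled $i$ at level $j$. Because $\sum_i p_i = 1$, the identity $\sum_{i,j : p_{ij}=1} 2^{-j} = 1$ holds, which is exactly the Kraft equality needed to realize these as the leaves of a full binary tree; I would build the tree greedily level by level, at each level attaching the required leaves and branching the remaining internal nodes, and check that the Kraft equality guarantees this never gets stuck and terminates (or continues as an infinite tree when some $p_i$ is non-dyadic). This $T$ has output distribution $p$ by construction. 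Then $\mathbb{E}[L_T] = \sum_{i,j:p_{ij}=1} j\, 2^{-j}$, and I would bound this above: writing $j\,2^{-j}$ and comparing term-by-term with $p_i \log(1/p_i)$, the standard estimate gives $\mathbb{E}[L_T] < H(p) + 2$. Combined with the lower bound, this $T$ is within $2$ of optimal; to conclude it is exactly optimal I would note that the lower bound $H(p) \le \mathbb{E}[L_{T'}]$ holds for every valid $T'$, and that among the DDG trees the greedy construction is provably extremal — here I would either cite the minimality argument of Knuth–Yao directly (it is quoted as a classic theorem) or argue that any tree achieving fewer expected flips would violate the per-level Kraft accounting forced by the binary expansions.

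For tightness of the bounds, I would exhibit families: a dyadic distribution such as $p=(1/2,1/2)$ attains $\mathbb{E}[L_T]=1=H(p)$, showing the lower bound is achieved; and a family approaching $H(p)+2$ — e.g. $p=(1-2^{-k}, 2^{-k})$ or a three-point distribution with probabilities $(\tfrac12+\varepsilon, \ldots)$ tuned so that the wasteful carries in the binary expansions accumulate — showing the constant $2$ cannot be improved. Finally the "if and only if" in the last sentence follows from the construction direction (the greedy tree has exactly one leaf labeled $i$ at level $j$ precisely when $p_{ij}=1$) together with a uniqueness observation: any optimal tree must, level by level, allocate exactly the leaves dictated by the binary expansions, since the Kraft equality $\sum 2^{-j} = 1$ leaves no slack, so no optimal tree can have a different leaf pattern. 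The main obstacle I anticipate is the optimality claim — separating "this particular greedy tree is within $2$ of $H(p)$" from "no DDG tree does better" — which is precisely the subtle combinatorial heart of the Knuth–Yao theorem; since it is quoted here as classical, I would lean on citing it, but a self-contained argument requires the careful level-by-level exchange/Kraft argument sketched above.
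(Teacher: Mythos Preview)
The paper does not prove this theorem: it is stated as a classical result and attributed to \citet{knuth1976}, with no argument given in the body or the appendix. So there is no ``paper's own proof'' to compare against; you have written a sketch of the original Knuth--Yao argument where the paper simply cites it.

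On the content of your sketch: the lower-bound paragraph is more tangled than it needs to be, and you seem to talk yourself out of the correct argument. Since the tree is full and halts almost surely, the leaf probabilities are \emph{exactly} $q_x = 2^{-d(x)}$, so $\mathbb{E}[L_T] = \sum_x q_x \log(1/q_x) = H(\text{leaf})$ is an \emph{equality}, not an inequality obtained from Gibbs. The inequality you need is the data-processing (grouping) step $H(\text{leaf}) \ge H(\text{label}) = H(p)$, which you already identified. Chaining these gives $\mathbb{E}[L_T] = H(\text{leaf}) \ge H(p)$ in one line; there is no need for the auxiliary comparison you abandon mid-sentence.

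Your construction and upper-bound sketch are fine. You are right that the genuinely nontrivial part is showing that the binary-expansion tree is \emph{the} minimizer (not merely within $2$ of $H(p)$); your proposed level-by-level Kraft/exchange argument is the standard route, but as written it is only a gesture --- ``Kraft equality leaves no slack'' is not quite enough, since many non-optimal full trees also satisfy Kraft with equality. A complete argument has to show that moving any leaf off its binary-expansion level strictly increases the expected depth. Since the paper treats the whole theorem as a citation, leaning on \citet{knuth1976} here is entirely consistent with what the paper does.
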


We now present examples of DDG trees.

\begin{example}
\label{example:ddg-dyadic}
Let $p \defas (1/2, 1/4, 1/4)$.
By Thm.~\ref{thm:ddg-knuth-yao}, an entropy-optimal DDG tree for $p$
can be constructed directly from the
  binary expansions of the $p_i$, where $p_{ij}$ corresponds
  to the $j$th bit in the binary expansion of $p_i$
  ($i = 1,2,3; j \ge 0$).
Since $p_1 = (0.10)_2$ and $p_2 = p_3 = (0.01)_2$ are all dyadic, the
  entropy-optimal tree has three levels, and the sampler always halts
  after consuming at most 2 bits.
Also shown is an entropy-suboptimal tree for $p$, which always
  halts after consuming at most 3 bits.
\begin{figure}[H]
\begin{subfigure}[b]{.5\linewidth}
\centering
  \begin{tikzpicture}
  \centering
  \tikzset{level distance=10pt}
  \tikzset{every tree node/.style={anchor= north}}
  \Tree [
    [ 3 2 ] 1
  ]
  \end{tikzpicture}
  \caption*{Optimal DDG tree}
\end{subfigure}\hfill
\begin{subfigure}[b]{.5\linewidth}
\centering
  \begin{tikzpicture}
  \centering
  \tikzset{level distance=10pt}
  \tikzset{every tree node/.style={anchor= north}}
  \Tree [
    [ 1 [ 2 3 ] ] [ 1 [ 3 2 ]  ]
  ]
  \end{tikzpicture}
  \caption*{Suboptimal DDG tree}
\end{subfigure}%
\end{figure}
\end{example}

\begin{example}
\label{example:ddg-rat}
Let $p \defas (3/10, 7/10)$.
Although $p_1$ and $p_2$ have infinite binary expansions, they are
  rational numbers which can be encoded using a
  finite prefix and a bar atop a finite repeating suffix;
  i.e., $p_1 = (0.0\overline{1001})_2, p_2 = (0.1\overline{0110})_2$.
While any DDG tree for $p$ has infinitely many
  levels, it can be finitely encoded by using back-edges
  (shown in red).
The entropy-optimal tree has five levels and a back-edge from level 4
  to level 1, corresponding to the binary expansions of the
  $p_i$, where the suffixes have four digits and prefixes have one
  digit.
\begin{figure}[H]
\centering
\begin{subfigure}[b]{.5\linewidth}
\centering
  \begin{tikzpicture}
  \centering
  \tikzstyle{branch}=[shape=coordinate]
  \tikzstyle{leaf}=[circle,draw=red,fill=red,inner sep=0pt]
  \tikzset{level distance=10pt}
  \tikzset{every tree node/.style={anchor= north}}
  \Tree [
    [.\node[branch](b1){};
      [ [ [ \edge[color=red];\node[leaf](r1){}; 1 ] 2 ] 2 ] 1 ]
    2
  ]
  \draw[->,red] (r1.west) to[bend left=80] ([xshift=-.1cm]b1.west);
  \end{tikzpicture}
  \caption*{Optimal DDG tree}
  \label{fig:ddg-rat-opt}
\end{subfigure}\hfill
\begin{subfigure}[b]{.45\linewidth}
\centering
  \begin{tikzpicture}
  \centering
  \tikzstyle{branch}=[shape=coordinate]
  \tikzstyle{leaf}=[circle,draw=red,fill=red,inner sep=0pt]
  \tikzset{level distance=10pt}
  \tikzset{every tree node/.style={anchor= north}}
  \Tree
  [.\node[branch](b1){};
    [
      \edge[color=red];\node[leaf](r3){};
      [
        \edge[color=red];\node[leaf](r2){};
        [ \edge[color=red];\node[leaf](r1){}; 2 ]
      ]
    ]
    [
      [ 2 1 ]
      2
    ]
  ]
  \draw[->,red] (r1.west)
      to[out=180, in=-90]
      ($(r3)-(.5,0)$)
      to[out=90, in=140]
      ([xshift=-.1cm]b1.west);
  \draw[->,red] (r2.west)
      to[out=180, in=-90]
      ($(r3)-(.30,0)$)
      to[out=90, in=140]
      ([xshift=-.1cm]b1.west);
  \draw[->,red] (r3.south)
      to[out=160, in=140]
      ([xshift=-.1cm]b1.west);
  \end{tikzpicture}
  \caption*{Suboptimal DDG tree}
  \label{fig:ddg-rat-subopt}
\end{subfigure}%
\end{figure}
\end{example}

\setcounter{figure}{0}

\begin{definition}[Depth of a DDG tree]
\label{def:ddg-depth}
Let $T$ be a DDG tree over $\set{1,\dots,n}$ with output distribution
  $(p_1, \dots, p_n)$, where each $p_i \in \Rationals$.
We say that $T$ has depth $k$ if the longest path from the root node
  to any leaf node in the shortest finite tree encoding of $T$ (using
  back-edges, as in Example~\ref{example:ddg-dyadic}) consists of $k$
  edges.
\end{definition}

In this paper, we do not consider distributions with irrational entries,
  as their DDG trees are infinite and cannot be finitely encoded.
Thm.~\ref{thm:ddg-knuth-yao} settles the
  problem of constructing the most ``efficient'' sampler for a target
  distribution, when efficiency is measured by the expected number of
  bits consumed.

However, designing an entropy-efficient sampler that is also
  space-efficient remains an open problem.
In particular, as we show in Section~\ref{sec:knuth-yao-complexity},
  the size of the optimal DDG tree $T$ is exponentially larger
  than the number of bits needed to encode $p$ and is therefore often
  infeasible to construct in practice.
\citet{knuth1976} allude to this issue,
  saying ``most of the algorithms which
  achieve these optimum bounds are very complex, requiring a
  tremendous amount of space''.

\section{COMPLEXITY OF ENTROPY- OPTIMAL SAMPLING}
\label{sec:knuth-yao-complexity}

This section recounts background results from \citet[Section~3]{saad2020popl}
  about the class of
  entropy-optimal samplers given in Thm.~\ref{thm:ddg-knuth-yao}.
These results establish the worst-case exponential space of
  entropy-sampling and formally motivate the need for space-efficient
  and near-optimal samplers developed in
  Section~\ref{sec:fast-loaded-dice-roller}.
For completeness, the proofs are presented in Appendix~\ref{appx:proofs}.

For entropy-optimal DDG trees that have depth $k \ge 1$
  (Definition~\ref{def:ddg-depth}), the output probabilities are
  described by a fixed-point $k$-bit number.
The fixed-point $k$-bit numbers $x$ are those such that for some
  integer $l$ satisfying $0 \le l \le k$, there is an element $(x_1,
  \dots, x_k) \in \set{0,1}^{l} \times \set{0,1}^{k-l}$, where the
  first $l$ bits correspond to a finite prefix and the final $k-l$
  bits correspond to an infinitely repeating suffix, i.e., $x = (0.x_1
  \ldots x_l \overline{x_{l+1} \ldots x_{k}})_2$.
Write $\NumSys{kl}$ for the set of rationals in $[0,1]$ describable in
  this way.
\begin{proposition}
\label{prop:numsys-bases}
For integers $k$ and $l$ with $0 \le l \le k$,
  define $Z_{kl} \defas 2^k - 2^{l}\Indicator_{l < k}$.
Then
  \begin{align*}
  \NumSys{kl} =
  \left\lbrace
    \frac{0}{Z_{kl}},
    \frac{1}{Z_{kl}},
    \dots,
    \frac{Z_{kl}-1}{Z_{kl}},
    \frac{Z_{kl}}{Z_{kl}}\Indicator_{l < k}
  \right\rbrace.
  \label{eq:wqa}
  \end{align*}
\end{proposition}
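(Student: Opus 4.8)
The plan is to compute $\NumSys{kl}$ directly from the definition: evaluate the rational $x = (0.x_1\ldots x_l\overline{x_{l+1}\ldots x_k})_2$ in closed form as a function of its defining bit string, and then determine exactly which values arise as the bits range over $\set{0,1}^{l}\times\set{0,1}^{k-l}$.

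First I would dispose of the degenerate case $l=k$, where there is no repeating suffix: then $x = \sum_{i=1}^{k}x_i2^{-i}$ is an arbitrary $k$-bit dyadic rational in $[0,1)$, so $\NumSys{kk} = \set{0/2^k,\dots,(2^k-1)/2^k}$, which matches the claim since $Z_{kk}=2^k$ and the factor $\Indicator_{l<k}=0$ removes the endpoint $1$. For the main case $l<k$, I would pass to integer encodings: let $a \defas \sum_{i=1}^{l}x_i2^{l-i}\in\set{0,\dots,2^l-1}$ encode the prefix and $b \defas \sum_{j=1}^{k-l}x_{l+j}2^{k-l-j}\in\set{0,\dots,2^{k-l}-1}$ encode one period of the suffix. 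Summing the geometric series contributed by the infinitely many copies of the period gives
\[
x \;=\; \frac{a}{2^l} + \frac{b/2^k}{1-2^{-(k-l)}} \;=\; \frac{a(2^{k-l}-1)+b}{2^l(2^{k-l}-1)} \;=\; \frac{a(2^{k-l}-1)+b}{Z_{kl}},
\]
using $2^l(2^{k-l}-1)=2^k-2^l=Z_{kl}$ (valid since $l<k$). So it remains to show that the numerator $N \defas a(2^{k-l}-1)+b$ realizes precisely the integers $\set{0,1,\dots,Z_{kl}}$ as $(a,b)$ ranges over the product set.

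The key step — and the only part that is not bookkeeping — is a tiling observation. For each fixed $a$, as $b$ runs over $\set{0,\dots,2^{k-l}-1}$ the quantity $N$ runs over the block of $2^{k-l}$ consecutive integers $\set{a(2^{k-l}-1),\,\dots,\,a(2^{k-l}-1)+2^{k-l}-1}$; and the right endpoint of the block for $a$ equals the left endpoint of the block for $a+1$, since both equal $(a+1)2^{k-l}-a-1$. Hence consecutive blocks overlap in exactly one point and leave no gaps, so their union over $a=0,\dots,2^l-1$ is the full interval of integers from $0$ up to $(2^l-1)(2^{k-l}-1)+(2^{k-l}-1)=2^l(2^{k-l}-1)=Z_{kl}$. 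Dividing by $Z_{kl}$ then yields $\NumSys{kl}=\set{0/Z_{kl},1/Z_{kl},\dots,Z_{kl}/Z_{kl}}$; the top value $Z_{kl}/Z_{kl}=1$ corresponds to the all-ones string $(0.\overline{1})_2$, consistent with $\Indicator_{l<k}=1$.

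I do not expect a genuine obstacle beyond stating the overlap identity cleanly: the whole argument is an explicit geometric-series evaluation followed by an interval-covering count. The one place where care is needed is the arithmetic involving $2^{k-l}-1$, where an off-by-one error in the endpoints of each block is easy to make, so I would verify the two endpoint identities (right end of block $a$ = left end of block $a+1$, and right end of block $2^l-1$ = $Z_{kl}$) explicitly.
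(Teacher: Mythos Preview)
Your proof is correct and follows essentially the same route as the paper: separate the dyadic case $l=k$, then for $l<k$ evaluate the geometric series to express $x$ as $\bigl(a(2^{k-l}-1)+b\bigr)/Z_{kl}$, which is exactly the closed form the paper derives (in slightly different notation). Your tiling argument showing that the numerator $a(2^{k-l}-1)+b$ hits every integer in $\set{0,\dots,Z_{kl}}$ is a detail the paper's proof leaves implicit---it stops at the formula---so your version is in fact the more complete of the two.
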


The next result establishes that the number systems $\NumSys{kl}$
  ($k\,{\in}\,\Naturals$, $0\,{\le}\,l\,{\le}\,k$) from Prop.~\ref{prop:numsys-bases}
  describe the output probabilities of
  optimal DDG trees with depth-$k$.

\begin{theorem}
\label{thm:k-bit-bases}
Let $T$ be an entropy-optimal DDG tree with a non-degenerate
  output distribution $(p_i)_{i=1}^{n}$
  for $n > 1$.
The depth of $T$ is the smallest integer $k$ such that there exists
  an integer $l \in \set{0,\dots,k}$ for which all the $p_i$
  are integer multiples of $1/Z_{kl}$ (hence in $\NumSys{kl}$).
\end{theorem}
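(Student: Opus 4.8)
The plan is to combine the structural description of the entropy-optimal tree in Theorem~\ref{thm:ddg-knuth-yao} with the arithmetic description of $\NumSys{kl}$ in Proposition~\ref{prop:numsys-bases}. By Proposition~\ref{prop:numsys-bases}, a number $p_i\in[0,1)$ is an integer multiple of $1/Z_{kl}$ exactly when $p_i\in\NumSys{kl}$, which in turn (by the definition of $\NumSys{kl}$) holds exactly when the binary expansion of $p_i$ can be written with a prefix of length $l$ followed by an infinitely repeating block of length $k-l$; equivalently, the expansion of $p_i$ is eventually periodic with pre-period at most $l$ and period dividing $k-l$, reading the case $l=k$ as ``terminating'' (i.e.\ $p_i$ dyadic with at most $k$ bits). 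So the right-hand side of the theorem is the least $k$ for which a single pair (prefix length $l$, block length $k-l$) simultaneously represents every $p_i$. On the other side, Theorem~\ref{thm:ddg-knuth-yao} tells us that $T$ carries a leaf labelled $i$ at level $j$ precisely when the $j$th bit of $p_i$ equals $1$; hence the level-by-level shape of $T$ is forced by the binary expansions, and in particular the label set $L_j\defas\set{i:p_{ij}=1}$ at each level $j$ and the number $a_j$ of internal nodes at level $j$ --- which satisfies $a_0=1$, $a_j=2a_{j-1}-\abs{L_j}$, and the closed form $a_j=\sum_i(2^jp_i-\floor{2^jp_i})$ --- are determined. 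The theorem then follows once we match the finite back-edge encodings of $T$ with the common prefix/block representations of the $p_i$.

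For the upper bound, let $k_0$ be the least value on the right-hand side, with witness $l_0\le k_0$, and write each $p_i$ with prefix length $l_0$ and block length $k_0-l_0$. Then $L_j=L_{j+(k_0-l_0)}$ for every $j>l_0$; moreover, since $2^{l_0}p_i-\floor{2^{l_0}p_i}$ is the value of the bits of $p_i$ strictly after position $l_0$, and those bits repeat with period $k_0-l_0$, we get $2^{l_0}p_i-\floor{2^{l_0}p_i}=2^{k_0}p_i-\floor{2^{k_0}p_i}$ for all $i$, hence $a_{l_0}=a_{k_0}$. Therefore the shape of $T$ from level $l_0$ onward is periodic with period $k_0-l_0$, so $T$ is finitely encoded by levels $0,\dots,k_0$ together with a back-edge from level $k_0$ to level $l_0$ (when $l_0=k_0$ the distribution is dyadic, $a_{k_0}=0$, the tree terminates, and no back-edge is needed). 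Every leaf node of this encoding is at level at most $k_0$, so $T$ has depth at most $k_0$.

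For the lower bound, take a shortest finite encoding of $T$ and let $k$ be its longest root-to-leaf-node path. Because the shape of $T$ is captured by the (eventually periodic, by rationality) sequence $(L_j)_{j\ge1}$, a shortest encoding loops from the first level $k$ at which the state $(a_k,(L_m)_{m>k})$ reproduces the state at an earlier level $l<k$ (or simply terminates, when all later $L_m$ are empty and $a_k=0$). Running the previous paragraph in reverse, $L_j=L_{j+(k-l)}$ for $j>l$, so every $p_i$ admits a common prefix-$l$, block-$(k-l)$ representation, i.e.\ $p_i\in\NumSys{kl}$ for all $i$; thus $k$ is an admissible value on the right-hand side and the depth is at least $k_0$, giving equality. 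I expect the delicate step to be precisely this one: showing that a shortest encoding of the optimal tree can always be taken to fold with a single back-edge level-pair $(k,l)$ --- so that the eventual periodicity of $(L_j)_j$ translates to one common exponent $k$ rather than a mixture of block lengths across the $p_i$ --- since a priori the internal nodes at a given level, though continued by the same future label sets, need not host literally identical subtrees. Once this is settled, writing the least common denominator of the $p_i$ as $2^{A}B$ with $B$ odd makes the value explicit: $k_0=A$ if $B=1$, and otherwise $k_0=A+r$, where $r$ is the multiplicative order of $2$ modulo $B$.
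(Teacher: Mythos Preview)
Your route is genuinely different from the paper's. You argue structurally, tracking the level-by-level label sets $L_j$ and internal-node counts $a_j$ and identifying the depth with the first recurrence of the ``state'' $(a_j,(L_m)_{m>j})$. The paper instead argues arithmetically: having fixed $k$ as the depth of $T$, it writes each $p_i$ with prefix length $l_i$ and block length $k-l_i$ and then shows that all the $l_i$ coincide. For $n=2$ this is done by putting $p_1=a/c$ and $p_2=b/d$ in lowest terms with $c=Z_{kl_1}$, $d=Z_{kl_2}$, and reading $p_1+p_2=1$ as $ad+bc=cd$; a short $\gcd$ argument then forces $c=d$, hence $l_1=l_2$. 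The constraint $\sum_i p_i=1$, which your argument never invokes, is what synchronises the periods.

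The step you flag as delicate is a genuine gap, and it is exactly where the paper's arithmetic does work your structural argument cannot easily replace. Your upper bound is fine: if all $p_i\in\NumSys{k_0l_0}$ then $(L_j)$ is periodic past level $l_0$, your identity $a_j=\sum_i(2^jp_i-\lfloor 2^jp_i\rfloor)$ gives $a_{l_0}=a_{k_0}$, and an encoding of depth $k_0$ exists. But in the other direction, a finite encoding of depth $k$ may a priori carry back-edges with several distinct loop lengths, and the unfolded $(L_j)$ is then only guaranteed to be eventually periodic with period the $\mathrm{lcm}$ of those lengths, which could exceed $k$. Ruling this out is essentially the content of the Corollary that the paper derives \emph{from} this theorem, so appealing to it here would be circular. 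The paper sidesteps the circularity by never reasoning about encodings at all: it works directly with the binary expansions and the single equation $\sum_i p_i=1$. Your proposal thus handles the easy direction cleanly and supplies a pleasant closed form for $k_0$ at the end, but the hard direction still needs an additional idea---most naturally the arithmetic one.
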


\begin{corollary}
Every back-edge in an entropy-optimal depth-$k$ DDG tree
  originates at level $k-1$ and ends at the same level $l$,
  where $0 \le l < k-1$.
\end{corollary}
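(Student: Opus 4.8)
The plan is to read the shape of the depth-$k$ tree $T$ off the binary expansions of the $p_i$. By Thm.~\ref{thm:k-bit-bases}, $k$ is the least integer for which all $p_i$ lie in a common $\NumSys{kl}$; that is, each $p_i$ has expansion $(0.p_{i1}\dots p_{il}\overline{p_{i,l+1}\dots p_{ik}})_2$ with an $l$-bit aperiodic prefix followed by a block of length $P\defas k-l$ repeating forever. Let $\ell_j$ be the number of indices $i$ with $p_{ij}=1$. By Thm.~\ref{thm:ddg-knuth-yao}, $T$ has exactly $\ell_j$ leaves at level $j$, carrying the labels $\set{i:p_{ij}=1}$; writing $I_j$ for its number of internal nodes at level $j$, fullness gives $I_0=1$ (as $n>1$) and $I_j=2I_{j-1}-\ell_j$. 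Periodicity of the expansions yields $\ell_{j+P}=\ell_j$ and equality of the leaf-label sets at levels $j$ and $j+P$ for every $j\ge l+1$.

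The core step is to propagate this periodicity to the internal-node counts, i.e.\ to show $I_{j+P}=I_j$ for all $j\ge l$, so that the forest hanging below level $l$ is a level-shifted copy of the one hanging below level $k=l+P$. I would use the identity $I_j2^{-j}=\sum_{j'>j}\ell_{j'}2^{-j'}$, obtained by telescoping $I_{j'-1}2^{-(j'-1)}-I_{j'}2^{-j'}=\ell_{j'}2^{-j'}$ together with the fact that a DDG tree halts with probability one, so $I_j2^{-j}\to 0$. Peeling off one period and using $\ell_{j'+P}=\ell_{j'}$ for $j'\ge l+1$ gives $I_{l+P}2^{-(l+P)}=2^{-P}I_l2^{-l}$, hence $I_{l+P}=I_l$; the relation $I_{j+P}-I_j=2(I_{j-1+P}-I_{j-1})$, valid for $j\ge l+1$, then propagates this to all $j\ge l$ by induction. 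Turning ``the digits are eventually periodic'' into ``the tree is eventually periodic'' is the step I expect to require the most care; note one only uses this aggregate, level-by-level periodicity, since the output distribution is unaffected by how the leaves at a level are apportioned among the internal nodes just above them.

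With periodicity in hand I would exhibit a shortest encoding realizing the stated back-edges. Let $G$ be the finite graph carrying, at each level $0\le j\le k-1$, exactly $I_j$ internal nodes and, at each level $1\le j\le k$, exactly $\ell_j$ leaves labeled $\set{i:p_{ij}=1}$, with each internal node at level $j-1$ receiving two children at level $j$; the $2I_{k-1}-\ell_k=I_k=I_l$ children at level $k$ that are not true leaves are instead back-edges originating at their parent (a level-$(k-1)$ node) and ending at one of the $I_l$ internal nodes at level $l$, matched bijectively. Writing $Q_i(j)\defas\sum\set{q_i(x):x\text{ internal at level }j}$ with $q_i(x)$ the probability of returning $i$ starting from $x$, a short telescoping computation gives $Q_i(0)=\sum_{j=1}^{k}2^{-j}p_{ij}+2^{-k}Q_i(l)$ and $Q_i(l)(1-2^{-P})=\sum_{j=l+1}^{k}2^{-(j-l)}p_{ij}$, which combine to $Q_i(0)=p_i$ (exactly the eventually-periodic expansion of $p_i$); and by the leaf-label periodicity, the unrolling of $G$ has exactly one leaf labeled $i$ at every level $j$ with $p_{ij}=1$, hence equals $T$ by Thm.~\ref{thm:ddg-knuth-yao}. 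Its longest root-to-leaf path has length exactly $k$: every level $0\le j\le k-1$ contains an internal node (if $I_j=0$ then $I_{k-1}=0$, forcing every $p_i$ to be dyadic of bit-length $<k$ and contradicting that $T$ has depth $k$), which gives a length-$k$ path, while all nodes of $G$ sit at levels $\le k$, and no encoding of $T$ can have a shorter longest path. Hence $G$ is a shortest encoding, and every one of its back-edges runs from level $k-1$ to level $l$.

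Two points remain. First, $l<k-1$: if $l=k-1$ then $P=1$, so every expansion is eventually constant, every $p_i$ is dyadic, and $T$ is finite with no back-edges --- contrary to the hypothesis (the statement being vacuous otherwise); thus $l\le k-2$. Second, the endpoint $l$ is forced for all back-edges of any shortest encoding: a back-edge originating at a level $d$ and ending at a level $l'$ forces, upon unrolling, every $p_i$ to have an expansion with aperiodic prefix of length $l'$ and period $(d+1)-l'$, so by Thm.~\ref{thm:k-bit-bases} the depth of $T$ is at most $d+1$; since it equals $k$ this gives $d=k-1$, and then $l'$ must be the common aperiodic-prefix length $l$. (Much of paragraphs two and three overlaps with the proof of Thm.~\ref{thm:k-bit-bases} and could be shortened by invoking its structural conclusions directly.)
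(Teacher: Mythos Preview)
The paper gives no proof of this corollary at all; it is stated immediately after Thm.~\ref{thm:k-bit-bases} as a direct consequence and is not revisited in the appendix. Your write-up therefore supplies far more than the paper does, and the constructive part (paragraphs one through three) is correct: the telescoping identity $I_j2^{-j}=\sum_{j'>j}\ell_{j'}2^{-j'}$, the derivation $I_{l+P}=I_l$, the inductive propagation of periodicity to the internal-node counts, and the explicit encoding $G$ with back-edges from level $k-1$ to level $l$ all go through, and your closing remark that much of this overlaps with the proof of Thm.~\ref{thm:k-bit-bases} is exactly right---the paper is content to leave the corollary at that level of implicitness.

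There is one genuine gap, in your final paragraph. You assert that ``a back-edge originating at a level $d$ and ending at a level $l'$ forces, upon unrolling, every $p_i$ to have an expansion with aperiodic prefix of length $l'$ and period $(d+1)-l'$.'' This is not true for an arbitrary finite encoding: a single back-edge only imposes periodicity on the portion of the unrolled tree that actually passes through that edge, and an encoding with several back-edges (possibly at different levels or with different targets) need not force a common period on all $p_i$ from any one of them in isolation. Consequently the deduction ``depth of $T$ is at most $d+1$'' does not follow from one back-edge alone, and the conclusion $l'=l$ is likewise unjustified as stated. To close this, you should argue globally rather than edge-by-edge: first show (using minimality of $k$ in Thm.~\ref{thm:k-bit-bases}) that the $l$ there is unique---if $m\mid Z_{kl_1}$ and $m\mid Z_{kl_2}$ with $l_1<l_2\le k$ then $m\mid 2^{l_1}(2^{l_2-l_1}-1)$, which after stripping the dyadic part of $m$ yields a strictly smaller admissible $k$, a contradiction---and then observe that any finite encoding of $T$ whose longest root-to-leaf path has length $k$ must, upon unrolling, reproduce the exact leaf pattern dictated by Thm.~\ref{thm:ddg-knuth-yao}; the level-by-level periodicity with period $P=k-l$ you already established then forces every internal node at level $k$ (equivalently, every back-edge child of a level-$(k-1)$ node) to be identified with a level-$l$ node, and rules out back-edges from shallower levels in a shortest encoding.
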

The next result, Thm.~\ref{thm:depth-perfect-sampling}, implies
  that an entropy-optimal DDG tree for a coin with weight $1/m$
  has depth at most $m-1$.
Thm.~\ref{thm:depth-perfect-sampling-prime} shows that this bound
  is tight for many $m$, and
Rem.~\ref{remark:Artin} notes that it is likely tight for infinitely
  many $m$.

\begin{theorem}
\label{thm:depth-perfect-sampling}
Suppose $p$ is defined by $p_i = a_i/m$ $(i=1,\dots,n)$,
  where $\sum_{i=1}^{n}a_i=m$.
The depth of any entropy-optimal sampler for
  $p$ is at most $m-1$.
\end{theorem}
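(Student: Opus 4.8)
The goal is to show that for $p_i = a_i/m$ with $\sum a_i = m$, the entropy-optimal DDG tree has depth at most $m-1$. By Theorem~\ref{thm:k-bit-bases}, the depth is the smallest $k$ such that there exists $l \in \{0,\dots,k\}$ with every $p_i$ an integer multiple of $1/Z_{kl}$, where $Z_{kl} = 2^k - 2^l \Indicator_{l<k}$. So it suffices to exhibit \emph{some} pair $(k,l)$ with $k \le m-1$ such that $m \mid Z_{kl}$: indeed $m \mid Z_{kl}$ forces each $a_i/m = (a_i Z_{kl}/m)/Z_{kl}$ to lie in $\NumSys{kl}$, so the true depth is at most that $k$.

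**The number-theoretic core.** The plan is to split into two cases based on the structure of $m$. Write $m = 2^s t$ with $t$ odd. If $t = 1$ (so $m = 2^s$ is a power of two), take $l = k = s$; then $Z_{kk} = 2^s = m$, so $m \mid Z_{kk}$ and the depth is at most $s = \log_2 m \le m-1$ for $m > 1$. If $t > 1$, let $d$ be the multiplicative order of $2$ modulo $t$, so $t \mid 2^d - 1$; set $l = s$ and $k = s + d$. Then $Z_{kl} = 2^{s+d} - 2^s = 2^s(2^d - 1)$, which is divisible by $2^s t = m$. It remains to check $k = s + d \le m - 1$. Since $d$ is the order of $2$ mod $t$, we have $d \le \varphi(t) \le t - 1$, and one can verify $s + (t-1) \le 2^s t - 1 = m - 1$ for $t \ge 3$, $s \ge 0$ (the worst case $s = 0$ gives $d \le t-1 = m-1$, matching the bound; for $s \ge 1$ the exponential $2^s t$ dominates comfortably).

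**Main obstacle.** The delicate point is the case $s = 0$, $t = m$ odd, where the bound $d \le m - 1$ must be extracted sharply from the fact that the order of $2$ modulo $m$ divides $\varphi(m) \le m - 1$ — here equality $d = m-1$ can occur (exactly when $2$ is a primitive root mod $m$, e.g.\ $m$ prime with $2$ a primitive root), which is why the theorem's bound is tight and why Thm.~\ref{thm:depth-perfect-sampling-prime} and Rem.~\ref{remark:Artin} are flagged as follow-ups. I would also need to confirm that $l \le k$ holds in both constructions (immediate: $l = s \le s + d = k$ and $l = k$ in the power-of-two case) and that the resulting distribution is genuinely non-degenerate so Thm.~\ref{thm:k-bit-bases} applies; degeneracy (some $p_i \in \{0,1\}$) is excluded by hypothesis since $n > 1$ and all $a_i \ge 1$. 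Finally, I should note the argument shows $m \mid Z_{kl}$ is \emph{sufficient}; the theorem only claims an upper bound on depth, so we need not argue minimality of our chosen $(k,l)$ — the actual optimal depth can only be smaller.
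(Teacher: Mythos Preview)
Your proof is correct and follows essentially the same approach as the paper: both reduce via Theorem~\ref{thm:k-bit-bases} to exhibiting a pair $(k,l)$ with $k \le m-1$ and $m \mid Z_{kl}$, and both achieve this by factoring out the largest power of $2$ from $m$ and invoking the finite multiplicative order of $2$ modulo the odd part. Your version uses the exact order $d$ (yielding the slightly tighter $k = s+d$) where the paper uses Euler's totient $\phi$ and simply fixes $k = m-1$ in the odd case, but the argument is substantively identical.
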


\begin{theorem}
\label{thm:depth-perfect-sampling-prime}
Let $p$ be as in Thm.~\ref{thm:depth-perfect-sampling}.
If $m$ is prime and 2 is a primitive root modulo $m$,
  then the depth of an entropy-optimal DDG tree for
  $p$ is $m-1$.
\end{theorem}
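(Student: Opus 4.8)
The plan is to combine Thm.~\ref{thm:depth-perfect-sampling} (which gives the upper bound $k \le m-1$) with Thm.~\ref{thm:k-bit-bases} (which characterizes the depth as the smallest $k$ admitting some $l$ with all $p_i \in \NumSys{kl}$). So it suffices to show that when $m$ is prime and $2$ is a primitive root modulo $m$, no $k < m-1$ works. By Thm.~\ref{thm:k-bit-bases}, this means showing: for every $k \le m-2$ and every $l \in \set{0,\dots,k}$, the rational $1/m$ (and hence the distribution, which has some entry with denominator exactly $m$ in lowest terms once we assume $p$ is ``perfect'', i.e.\ $\gcd$ considerations make $m$ the true denominator) is \emph{not} an integer multiple of $1/Z_{kl}$, where $Z_{kl} = 2^k - 2^l\Indicator_{l<k}$. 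Equivalently, $m \nmid Z_{kl}$ for all such $k,l$.

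First I would reduce to a pure number-theoretic claim. The condition ``$1/m$ is an integer multiple of $1/Z_{kl}$'' is exactly ``$m \mid Z_{kl}$''. When $l = k$ we have $Z_{kk} = 2^k$, and since $m$ is an odd prime, $m \nmid 2^k$; so this case is immediate. When $l < k$, $Z_{kl} = 2^l(2^{k-l} - 1)$, and again since $m$ is odd we need $m \mid 2^{k-l} - 1$, i.e.\ $2^{k-l} \equiv 1 \pmod m$. Since $2$ is a primitive root mod $m$, its multiplicative order is exactly $m-1$, so $2^{k-l} \equiv 1 \pmod m$ forces $(m-1) \mid (k-l)$. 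But $1 \le k-l \le k \le m-2 < m-1$, so $(m-1) \mid (k-l)$ is impossible. Hence $m \nmid Z_{kl}$ for every $k \le m-2$ and every $l$, so no depth smaller than $m-1$ can represent $1/m$.

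Next I would invoke Thm.~\ref{thm:k-bit-bases} in the forward direction to conclude that the depth of the entropy-optimal DDG tree is \emph{at least} $m-1$, and combine with Thm.~\ref{thm:depth-perfect-sampling} to get equality. One bookkeeping point to nail down: Thm.~\ref{thm:k-bit-bases} is stated for a non-degenerate distribution on $n > 1$ outcomes, so I should note that $p$ as in Thm.~\ref{thm:depth-perfect-sampling} with $\sum a_i = m$ and $m$ prime indeed has at least one $p_i$ whose reduced denominator is $m$ (otherwise every $a_i/m$ reduces to a smaller denominator, forcing $m \mid a_i$ for all $i$, contradicting $0 < a_i$ and $\sum a_i = m$ when $n > 1$); that $p_i$ then lies in $\NumSys{kl}$ only if $m \mid Z_{kl}$, which is what the previous paragraph rules out.

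The main obstacle is really just the clean invocation of the order-of-$2$ argument and making sure the degenerate/edge cases ($l = k$, and the ``is $m$ actually the denominator'' subtlety) are handled; the number theory itself is short once ``primitive root'' is unpacked as ``$\mathrm{ord}_m(2) = m-1$''. No deep machinery is needed beyond Thms.~\ref{thm:depth-perfect-sampling} and~\ref{thm:k-bit-bases} and the definition of $Z_{kl}$ from Prop.~\ref{prop:numsys-bases}.
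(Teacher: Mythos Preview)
Your proposal is correct and follows essentially the same route as the paper: reduce via Thm.~\ref{thm:k-bit-bases} to showing $m \nmid Z_{kl}$ for all $k \le m-2$ and $0 \le l \le k$, dispatch the case $l=k$ using $m$ odd, and in the case $l<k$ use that the multiplicative order of $2$ modulo $m$ is $m-1$ to rule out $2^{k-l}\equiv 1 \pmod m$. Your extra bookkeeping step---checking that some $p_i$ genuinely has reduced denominator $m$---is a detail the paper leaves implicit (indeed, since $m$ is prime and each $a_i$ is a positive integer with $\sum a_i = m$ and $n>1$, every $a_i<m$ is coprime to $m$), but it does no harm.
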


\begin{remark}
\label{remark:Artin}
The bound in Thm.~\ref{thm:depth-perfect-sampling} is likely
  the tightest possible for infinitely many $m$.
Assuming Artin's conjecture, there are infinitely many primes $m$ for
  which $2$ is a primitive root, which by
  Thm.~\ref{thm:depth-perfect-sampling-prime} implies any
  entropy-optimal DDG tree has depth $m$.
\end{remark}

Holding $n$ fixed,
  the tight upper bound $m$ on the depth of an entropy-optimal DDG tree
  for any distribution having an entry $1/m$ is thus exponentially larger (in $m$) than
  the $n\log(m)$ bits needed to encode the input instance
  (each of $a_1, \dots, a_n$ requires a word of size at least $\log(m)$ bits).
Fig.~\ref{fig:ddg-depth-bernoulli} shows a plot of the scaling
  characteristics from Thm.~\ref{thm:depth-perfect-sampling}
  and provides evidence for the tightness conjectured in
  Rem.~\ref{remark:Artin}.

\section{REJECTION SAMPLING}
\label{sec:rejection-sampling}

We now present several alternative algorithms for exact sampling based
  on the rejection method~\citep[II.3]{devroye1986}, which lead to the
  Fast Loaded Dice Roller presented in Section~\ref{sec:fast-loaded-dice-roller}.
Rejection sampling operates as follows:
given a \textit{target distribution} $p \defas (p_1, \dots, p_n)$ and
  \textit{proposal distribution} $q \defas (q_1, \dots, q_l)$ (with $n \le l$),
  first find a \textit{rejection bound} $A > 0$
  such that $p_i \le Aq_i$ ($i=1,\dots,n$).
Next, sample $Y \sim q$
  and flip a coin with weight $p_Y / (Aq_Y)$
  (where $p_{n+1} = \ldots = p_l = 0$):
  if the outcome is heads accept $Y$, otherwise repeat.
The probability of halting in any given round is:
  \begin{align*}
  \Pr\left[\bernoulli\left(\frac{p_Y}{Aq_Y}\right) = 1\right]
    = \sum_{i=1}^{n} p_i/(Aq_i) \, q_i=1/A.
  \end{align*}
The number of trials thus follows a geometric distribution with
  rate $1/A$, whose mean is $A$.
We next review common implementations of random-bit rejection samplers
  and their space--time characteristics.
All algorithms take $n$ positive
  integers $(a_1, \dots, a_n)$ and the sum $m$ as input, and return
  $i$ with probability $a_i/m$.

\paragraph{Uniform Proposal}
Consider the uniform proposal distribution
  $q \defas (1/n, \dots, 1/n)$.
Set
  $D \defas \max_i(a_i)$ and set
  $A \defas Dn/m$, which gives a tight rejection bound since
  $p_i \le \max_i(p_i) = D/m = (Dn/m) (1/n) = A q_i$,
  so that $i$ is accepted with probability
  $a_i / D$ \linebreak ($i=1,\dots,n$).
Alg.~\ref{alg:rejection-uniform}
  presents an implementation where
  \begin{enumerate*}[label=(\roman*)]
    \item simulating the uniform proposal
    (line~\ref{algline:rejection-uniform-fdr}), and
    \item accepting/rejecting the proposed sample
    (line~\ref{algline:rejection-uniform-bernoulli}),
  \end{enumerate*}
  are both achieved using the two entropy-optimal samplers
  in~\citet{lumbroso2013} for uniform and Bernoulli generation.
The only extra storage needed by Alg.~\ref{alg:rejection-uniform}
  is in computing the maximum $D$ during preprocessing
  (line~\ref{algline:rejection-uniform-preprocess}).
For runtime, $A\,{=}\,nD$ trials occur on average; each trial uses
  $\log{n}$ bits for sampling $\uniform(n)$ and 2 bits for
  sampling $\bernoulli(a_i/D)$ on average.
The entropy is therefore order
  $n(m-n)\log{n} \ge n\log{n} \gg \log n$ bits.
Thus, despite its excellent space and
  preprocessing characteristics, the method can be exponentially
  wasteful of bits.

\begin{algorithm}[H]
\caption{\footnotesize Rejection sampler (uniform)}
\label{alg:rejection-uniform}
\begin{algorithmic}[1]
\footnotesize
%
\Statex \texttt{// PREPROCESS}
\State Let $D \gets \max(a_1, \dots, a_n)$;
  \label{algline:rejection-uniform-preprocess}
\Statex \texttt{// SAMPLE}
\While{true}
  \label{algline:rejection-uniform-while}
  \State $i \sim \textsc{FastDiceRoller}(n)$;\,(\citet[p.\,4]{lumbroso2013})
    \label{algline:rejection-uniform-fdr}
  \State $x \sim \textsc{Bernoulli}(a_i, D)$; (\citet[p.\,21]{lumbroso2013})
    \label{algline:rejection-uniform-bernoulli}
  \If{$(x = 1)$} \Return $i$; \EndIf
\EndWhile
\end{algorithmic}
\end{algorithm}

\paragraph{Dyadic Proposal}
Consider the following proposal distribution.
Let $k \in \Naturals$ be such that
  $2^{k-1} < m \le 2^{k}$
  (i.e., $k-1 < \log(m) \le k$ so that $k = \ceil{\log{m}}$) and set
  \begin{equation}
  \label{eq:dyadic-proposal}
  q \defas (a_1/2^k, \dots, a_n/2^k, 1-m/2^k).
  \end{equation}
The tightest rejection bound
  $A = 2^k/m$, since $p_i = a_i/m = (2^k/m)a_i/2^k = Aq_i$
  ($i=1,\dots,n$) and $p_{n+1} = 0 \le (2^k-m)/2^k = q_{n+1}$.
Thus, $i$ is always accepted when $1 \le i \le n$ and always rejected
  when $i=n+1$.

{\itshape Lookup-table Implementation}.
\citet{devroye1986} implements the rejection
  sampler with proposal Eq.~\eqref{eq:dyadic-proposal}
  using a length-$m$ lookup table $T$, which has exactly
  $a_i$ elements labeled $i$ ($i=1,\dots,n$), shown in
  Alg.~\ref{alg:rejection-dyadic-lookup-table}.
The sampler draws $k$ random bits $(b_1, \dots, b_k)$,
  forms an integer $W \defas \sum_{i=1}^{k}b_i2^{i-1}$, and returns
  $T[W]$ if $0 \le W < m$ or repeats if
  $m \le W \le 2^k-1$.
For fixed $n$, the $m\log{m}$ space required by $T$ is
  exponentially larger (in $m$) than the $n\log{m}$ bits
  needed to encode the input.
Further, the number of bits per trial is always $k$,
  so $k2^k/m \ge k \approx \log{m}$ bits are used on average, which
  (whenever $n \ll m$)
  can be much higher than the optimal rate, which is at most $H(p) + 2 \le \log{n} + 2$.
%
\begin{algorithm}[H]
\caption{\footnotesize Rejection sampler (dyadic + lookup table)}
\label{alg:rejection-dyadic-lookup-table}
\begin{algorithmic}[1]
\footnotesize
%
\Statex \texttt{// PREPROCESS}
\State Let $k \gets \ceil{\log{m}}$;
\State Make size-$m$ table $T$ with $a_i$ entries $i$ $(i=1,\dots,n)$;
\Statex \texttt{// SAMPLE}
\While{true}
    \State Draw $k$ bits, forming integer $W \in \set{0,\dots,2^{k-1}}$;
    \If{$(W < m)$} \Return $T[W]$;
        \label{algline:lookup-table-lookup}\EndIf
\EndWhile
\end{algorithmic}
\end{algorithm}

{\itshape Binary Search Implementation}.
The exponential memory of the lookup table
  in Alg.~\ref{alg:rejection-dyadic-lookup-table}
  can be eliminated by inversion sampling the proposal Eq.~\eqref{eq:dyadic-proposal}
  using binary search on the cumulative frequencies, as shown in
  Alg.~\ref{alg:rejection-dyadic-binary-search}.
%
%
%
This algorithm consumes the same number of bits $k$ as
  Alg.~\ref{alg:rejection-dyadic-lookup-table}.
Its exponential improvement in space
  from $m\log{m}$ to $n\log{m}$
  introduces a logarithmic runtime factor
  the inner loop of the sampler, i.e.,
  line~\ref{algline:binary-search-lookup} of
  Alg.~\ref{alg:rejection-dyadic-binary-search}
  sometimes uses $\Omega(\log{n})$
  time as opposed to the constant indexing time from
  line~\ref{algline:lookup-table-lookup} of
  Alg.~\ref{alg:rejection-dyadic-lookup-table},
  representing a typical space--runtime tradeoff.

\begin{algorithm}[H]
\caption{\footnotesize Rejection sampler (dyadic + binary search)}
\label{alg:rejection-dyadic-binary-search}
\begin{algorithmic}[1]
\footnotesize
%
\Statex \texttt{// PREPROCESS}
\State Let $k \gets \ceil{\log{m}}$;
\State Define array $T$, where $T[j] \defas \sum_{i=1}^{j}a_i$ $(j=1,\dots,n)$
\Statex \texttt{// SAMPLE}
\While{true}
    \State Draw $k$ bits, forming integer $W \in \set{0,\dots,2^{k-1}}$;
    \If{$(W < m)$} \Return $\min\set{j \mid W < T[j]}$;
        \label{algline:binary-search-lookup} \EndIf
\EndWhile
\end{algorithmic}
\end{algorithm}

\section{FAST LOADED DICE ROLLER}
\label{sec:fast-loaded-dice-roller}

Section~\ref{sec:rejection-sampling} shows that for
  rejection sampling using the dyadic proposal
  Eq.~\eqref{eq:dyadic-proposal}, a lookup table requires
  exponential memory and constant lookup time, whereas binary search
  uses linear memory but $\log{n}$ lookup time.
Moreover, these methods use $k$ bits/sample, which is highly
  wasteful for low-entropy distributions.
The key idea of the Fast Loaded Dice Roller (FLDR) presented in this
  section is to eliminate these
  memory, runtime, and entropy inefficiencies by simulating the
  proposal distribution $q$
  using an entropy-optimal sampler.

\begin{algorithm}[H]
\footnotesize
\caption{\footnotesize Fast Loaded Dice Roller (sketch)}
\label{alg:fldr-sketch}
\begin{enumerate}[itemsep=0mm]
\footnotesize
  \item Let $k \defas \ceil{\log{m}}$ and define the proposal distribution
    $q \defas (a_1/2^k, \dots, a_n/2^k, 1-m/2^k).$
    \label{item:phorometer}

  \item Simulate $X \sim q$, using an
    entropy-optimal sampler as described in Thm.~\ref{thm:ddg-knuth-yao}.
    \label{item:Larix}

  \item If $X \le n$, then return $X$, else go to Step~\ref{item:Larix}.
    \label{item:blastosphere}
\end{enumerate}
\end{algorithm}

\vspace{-.5cm}
\begin{figure}[H]
\centering
\begin{subfigure}[b]{.5\linewidth}
\centering
  \begin{tikzpicture}
  \centering
  \tikzstyle{branch}=[shape=coordinate]
  \tikzstyle{leaf}=[circle,draw=red,fill=red,inner sep=0pt]
  \tikzset{level distance=10pt}
  \tikzset{every tree node/.style={anchor= north}}
  \Tree
    [.\node[branch](b1){};
      [
        [
          [
            \edge[color=red];\node[leaf](r1){};
            1 ]
          1 ]
        4 ]
      4 ]
  \draw[->,red] (r1.west) to[bend left=80] ([xshift=-.1cm]b1.west);
  \end{tikzpicture}
  \captionsetup{skip=2pt}
  \caption{Optimal DDG tree}
  \label{fig:ddg-compare-ky}
\end{subfigure}\hfill
\begin{subfigure}[b]{.45\linewidth}
\centering
  \begin{tikzpicture}
  \centering
  \tikzstyle{branch}=[shape=coordinate]
  \tikzstyle{leaf}=[circle,draw=red,fill=red,inner sep=0pt]
  \tikzset{level distance=10pt}
  \tikzset{every tree node/.style={anchor= north}}
  \Tree
    [.\node[branch](b1){};
      [
        \edge[color=red];\node[leaf](r2){};
        [
          \edge[color=red];\node[leaf](r1){};
          1
        ]
      ]
      4
    ]
  \draw[->,red] (r1.west)
      to[out=180, in=-90]
      ($(r2)-(.35,0)$)
      to[out=90, in=140]
      ([xshift=-.1cm]b1.west);
  \draw[->,red] (r2.south)
      to[out=160, in=140]
      ([xshift=-.1cm]b1.west);
  \end{tikzpicture}
  \captionsetup{skip=2pt}
  \caption{FLDR DDG tree}
  \label{fig:ddg-compare-fldr}
\end{subfigure}%
\caption{Comparison of DDG trees for $p = (1/5, 4/5)$.}
\label{fig:ddg-compare}
\end{figure}
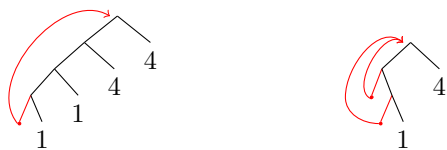

Fig.~\ref{fig:ddg-compare} shows a comparison of an entropy-optimal
  DDG tree and a FLDR DDG tree.
We next establish the linear space and near-optimal
  entropy of Alg.~\ref{alg:fldr-sketch}.

\begin{theorem}
\label{thm:rejection-dyadic-ddg-depth}
The DDG tree $T$ of FLDR in
  Alg.~\ref{alg:fldr-sketch} has at most
  $2(n+1)\ceil{\log{m}}$ nodes.
\end{theorem}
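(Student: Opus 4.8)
The plan is to identify FLDR's DDG tree with the entropy-optimal DDG tree of the proposal $q$ and then bound the number of leaves of the latter via Theorem~\ref{thm:ddg-knuth-yao}. Write $k \defas \ceil{\log m}$ and $q_i = c_i/2^k$, where $c_i = a_i$ for $i \le n$ and $c_{n+1} = 2^k - m$; each $c_i$ is a nonnegative integer with $c_i \le 2^k$, and $\sum_{i=1}^{n+1} c_i = 2^k$. Since every $q_i$ is dyadic with denominator dividing $2^k$, its binary expansion terminates by the $k$th digit. Hence, by the leaf-placement clause of Theorem~\ref{thm:ddg-knuth-yao} applied to $q$, the entropy-optimal DDG tree $T_q$ for $q$ is finite, has depth at most $k$, and contains exactly one leaf labeled $i$ at level $j$ precisely when the $j$th bit ($1 \le j \le k$) of the $k$-bit binary representation of $c_i$ equals $1$.

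Next I would count. For each outcome $i$, the number of leaves of $T_q$ labeled $i$ equals the Hamming weight $w(c_i)$ (the number of $1$-bits of $c_i$), which is at most $k$. Summing over the $n+1$ outcomes, the total number of leaves of $T_q$ is $L \defas \sum_{i=1}^{n+1} w(c_i) \le (n+1)k$. Since $T_q$ is a full binary tree (every internal node has two children), it has exactly $L-1$ internal nodes, hence $2L-1$ nodes in all.

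It then remains to relate $T$ to $T_q$. By construction of Alg.~\ref{alg:fldr-sketch}, FLDR runs the entropy-optimal sampler for $q$ and, upon producing outcome $n+1$, returns to the root and restarts; in the finite back-edge encoding of Definition~\ref{def:ddg-depth}, this means each leaf of $T_q$ labeled $n+1$ is replaced by a node carrying a back-edge to the root, while all other nodes are unchanged. This replacement is one-for-one, so $T$ has exactly as many nodes as $T_q$, namely $2L-1 \le 2(n+1)k - 1 < 2(n+1)\ceil{\log m}$, which gives the claimed bound (a shortest back-edge encoding can only have fewer nodes, so the bound still holds for $T$ as defined via Definition~\ref{def:ddg-depth}).

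The argument is short, so there is no serious obstacle; the two points needing care are (i) correctly invoking Theorem~\ref{thm:ddg-knuth-yao} to get that $T_q$ is finite with exactly $w(c_i)$ leaves per label $i$ (rather than just a depth bound, which alone would be exponential), and (ii) checking that passing from $T_q$ to FLDR's tree does not increase the node count, including the degenerate case $m = 2^k$, where $q_{n+1} = 0$, there are no back-edges, and $T$ is simply the optimal tree for $(a_1/2^k,\dots,a_n/2^k)$ — the bound then holds a fortiori.
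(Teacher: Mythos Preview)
Your proof is correct and follows essentially the same route as the paper: bound the number of leaves of the entropy-optimal tree $T_q$ for the dyadic proposal by $(n+1)k$ via Theorem~\ref{thm:ddg-knuth-yao}, use the full-binary-tree identity to get $2L-1$ total nodes, and observe that replacing the $(n+1)$-labeled leaves by back-edges to the root preserves the node count. The only cosmetic differences are that the paper derives $N=2s-1$ by a degree-sum argument rather than citing the standard leaves/internal-nodes identity, and it invokes Theorem~\ref{thm:k-bit-bases} for the depth whereas you go directly through Hamming weights; your explicit handling of the $m=2^k$ case is a nice addition the paper omits.
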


\begin{proof}
Suppose the DDG tree $T_q$ of the entropy-optimal sampler for $q$ in
  Step~\ref{item:Larix} of
  Alg.~\ref{alg:fldr-sketch} has $N$ total nodes,
  $s < N$ leaf nodes, and depth $k$.
Since $T_q$ is a full binary tree it has $N-1$ edges.
Moreover, the root has degree two, the $s$ leaves have degree one, and
  the $N-s-1$ internal nodes have degree three.
Equating the degrees and solving
  $2(N-1) = 2 + s + \hbox{$3(N-s-1)$}$ gives $N = 2s-1$.
Next, since $q$ is a dyadic distribution over $\set{1,\dots,n+1}$ with
  base $\ceil{\log{m}}$, $T_q$ has depth $k = \ceil{\log{m}}$
  (Thm.~\ref{thm:k-bit-bases}).
From the entropy-optimality of the depth-$k$ tree $T_q$ over
  $\set{1,\dots,n+1}$, we have $s \le (n+1)k$,
  since each of the $k$ levels has at
  most 1 leaf node labeled $i$ $(i=1,\dots,n+1)$
  (Thm.~\ref{thm:ddg-knuth-yao}).
Thus $N=2s-1 \le 2(n+1)k-1 \le 2(n+1)\ceil{\log{m}}$.
Finally, the DDG tree $T$ of FLDR is identical to $T_q$, except
  for additional back-edges from each leaf node labeled $n+1$ to the
  root (i.e., the rejection branch when $X=n+1$ in
  Step~\ref{item:blastosphere}).
\end{proof}

\begin{theorem}
\label{thm:rejection-dyadic-ddg-entropy}
The DDG tree $T$ of FLDR in
  Alg.~\ref{alg:fldr-sketch} satisfies
  \begin{align}
  0 \le \mathbb{E}[L_{T}] - H(p) < 6.
  \label{eq:Winebrennerian}
  \end{align}
\end{theorem}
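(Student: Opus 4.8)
The plan is to reduce \eqref{eq:Winebrennerian} to the Knuth--Yao bounds (Thm.~\ref{thm:ddg-knuth-yao}) applied to the entropy-optimal sampler $T_q$ for the proposal $q$, combined with an exact expression for $H(q)$ in terms of $H(p)$. Write $k \defas \ceil{\log m}$ and $\alpha \defas m/2^k$; since $2^{k-1} < m \le 2^k$ we have $\alpha \in (1/2, 1]$, and $q_i = \alpha p_i$ for $i \le n$ while $q_{n+1} = 1-\alpha$.

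First I would establish the renewal identity $\mathbb{E}[L_T] = \mathbb{E}[L_{T_q}]/\alpha$. The DDG tree $T$ of FLDR runs $T_q$ down to a leaf; if that leaf is labeled $n+1$ (which happens with probability $q_{n+1} = 1-\alpha$) the back-edge returns to the root and the process restarts using fresh coin flips, and otherwise it halts. Thus $L_T = L_{T_q}^{(1)} + \Indicator[\text{first run rejected}]\cdot L_T'$, where $L_{T_q}^{(1)}$ is the number of bits consumed in the first run of $T_q$ and $L_T'$ is an independent copy of $L_T$ driven by the fresh randomness used after a rejection. Taking expectations and using independence of $L_T'$ from the first run gives $\mathbb{E}[L_T] = \mathbb{E}[L_{T_q}] + (1-\alpha)\mathbb{E}[L_T]$, hence $\mathbb{E}[L_T] = \mathbb{E}[L_{T_q}]/\alpha$. (Equivalently, the number of rejection rounds is geometric with mean $1/\alpha$ and one invokes Wald's identity.)

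Next I would compute $H(q)$ by substituting $q_i = \alpha p_i$ and $q_{n+1} = 1-\alpha$ into the definition of Shannon entropy and peeling off the $\log(1/\alpha)$ term; this yields the identity $H(q) = \alpha H(p) + H_b(\alpha)$, where $H_b(\alpha) \defas -\alpha\log\alpha-(1-\alpha)\log(1-\alpha)$ is the binary entropy, so $0 \le H_b(\alpha) \le 1$. Combining with Thm.~\ref{thm:ddg-knuth-yao} applied to $T_q$ (whose output distribution is exactly $q$), namely $H(q) \le \mathbb{E}[L_{T_q}] < H(q)+2$, and dividing by $\alpha$: for the lower bound, $\mathbb{E}[L_T] \ge H(q)/\alpha = H(p) + H_b(\alpha)/\alpha \ge H(p)$; for the upper bound, $\mathbb{E}[L_T] < (H(q)+2)/\alpha = H(p) + (H_b(\alpha)+2)/\alpha < H(p) + 3\cdot 2 = H(p)+6$, using $H_b(\alpha)+2 \le 3$ and $1/\alpha < 2$. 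This establishes \eqref{eq:Winebrennerian}.

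The main obstacle is the renewal identity: one must be careful that although the number of bits used in a given trial and that trial's accept/reject outcome are dependent, the restart contributes exactly $(1-\alpha)\mathbb{E}[L_T]$ because it consumes fresh coin flips independent of the current trial. Once that step is in place, the remainder is the entropy identity $H(q) = \alpha H(p) + H_b(\alpha)$ and the elementary bounds $H_b(\alpha) \in [0,1]$, $\alpha \in (1/2,1]$, so I expect the author's proof to follow essentially this route, possibly phrasing the renewal step through the geometric distribution of the number of rejection rounds.
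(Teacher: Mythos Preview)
Your proof is correct and follows essentially the same route as the paper: both use the renewal identity $\mathbb{E}[L_T] = (2^k/m)\,\mathbb{E}[L_{T_q}]$, apply the Knuth--Yao bounds to $T_q$, and bound the resulting expression using $m/2^k > 1/2$. Your packaging via the identity $H(q) = \alpha H(p) + H_b(\alpha)$ and the one-line bound $(H_b(\alpha)+2)/\alpha < 3\cdot 2$ is slightly cleaner than the paper's term-by-term expansion in Eq.~\eqref{eq:mushroomlike}, but the two arguments are algebraically equivalent.
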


\begin{proof}
Let $T_q$ be an entropy-optimal DDG tree for the proposal distribution
  $q$ defined in Step~\ref{item:phorometer},
  so that $\mathbb{E}[L_{T_q}] = H(q) + t_q$ for
  some $t_q$ satisfying $0 \le t_q < 2$ (by Thm.~\ref{thm:ddg-knuth-yao}).
Since the expected number of trials of
  Alg.~\ref{alg:fldr-sketch} is $2^k/m$ and the
  number of trials is independent of the bits consumed in
  each round, we have
  $\mathbb{E}[L_{T}] = (2^k/m) \mathbb{E}[L_{T_q}]$.

If $m = 2^k$ then $p = q$, and we have $\mathbb{E}[L_{T}] - H(p) = t_q$,
  so Eq.~\eqref{eq:Winebrennerian} holds.
Now suppose $m < 2^k$.
Then
  \begin{align}
  &\mathbb{E}[L_{T}] - H(p) \notag \\
  &= (2^k/m)(H(q) + t_q) - H(p) \notag \\
  &= (2^k/m)H(q) - H(p) + 2^k t_q/m \notag \\
  &= \begin{aligned}[t]
      & 2^k/m \bigl[\textstyle\sum_{i=1}^{n} a_i/2^k \log(2^k/a_i) \\
      &\quad\qquad+ (2^k-m)/2^k\log(2^k/(2^k-m))\bigr]\\
      &- \textstyle\sum_{i=1}^{n}a_i/m\log(m/a_i)
      + 2^kt_q/m \\
  \end{aligned} \notag \\
  &= \begin{aligned}[t]
    & \textstyle\sum_{i=1}^{n} a_i/m[\log(2^k/a_i) - \log(m/a_i)] \\
      &+ (2^k-m)/m\log(2^k/(2^k-m))
      + 2^kt_q/m \\
  \end{aligned} \notag \\
  &= \begin{aligned}[t]
    & \log(2^k/m) + (2^k-m)/m\log(2^k/(2^k-m)) \\
    & + 2^kt_q/m.
  \end{aligned} \label{eq:mushroomlike}
  \end{align}
We now bound Eq.~\eqref{eq:mushroomlike}
  under our restriction $2^{k-1} < m < 2^{k}$.
All three terms are monotonically decreasing in
  $m \in \set{2^{k-1},\dots,2^k-1}$,
  hence maximized when $m = 2^{k-1}+1$,
  achieving a value less than that for $m = 2^{k-1}$.
Hence the first term is less than
  $\log(2^k/2^{k-1}) = 1$,
  the second term less than
\begin{align*}
\frac{(2^k-2^{k-1})}{2^{k-1}}
        \log\left(\frac{2^k}{2^k-(2^{k-1})}\right)
    = \log\left(\frac{2^k}{2^{k-1}}\right) = 1,
\end{align*}
and the third term less than $2t_q < 4$.
All three terms are positive, thus establishing bound
  Eq.~\eqref{eq:Winebrennerian}.
\end{proof}

Thms.~\ref{thm:rejection-dyadic-ddg-depth}
  and~\ref{thm:rejection-dyadic-ddg-entropy} together imply that
  Alg.~\ref{alg:fldr-sketch} uses $O(n\log{m})$
  space on a size $n\log{m}$ input instance and guarantees an
  entropy gap of at most $6$ bits sample, for any target distribution $p$.
Fig.~\ref{fig:ddg-depth-bernoulli} compares the asymptotic scaling of
  the size of the FLDR DDG tree from
  Thm.~\ref{thm:rejection-dyadic-ddg-depth} with that of the
  entropy-optimal sampler,
  and Fig.~\ref{fig:entropy-gap}
  decomposes the entropy gap from
  Thm.~\ref{thm:rejection-dyadic-ddg-entropy} according to the
  three terms in Eq.~\eqref{eq:mushroomlike}.

Alg.~\ref{alg:fldr} provides one of many possible implementations of FLDR
  (sketched in Alg.~\ref{alg:fldr-sketch}) that uses unsigned integer
  arithmetic to preprocess and sample an encoding of the underlying
  DDG tree.
This algorithm uses two data structures to eliminate the $O(n)$ inner-loop of the
  DDG tree sampler in \citet[Alg.~1]{roy2013} (at the
  expense of more memory), where array $h$ stores the number of leaf nodes at
  each level and matrix $H$ stores their labels in increasing order.
(A sparse matrix can often be used for $H$, as most of its entries are zero.)
Alternative DDG tree preprocessing and sampling algorithms that operate
  on an explicit tree data structure can be found
  in~\citet[Section~5]{saad2020popl}.

\begin{algorithm}[h]
\caption{\footnotesize
  Implementation of the Fast Loaded Dice Roller
  using unsigned integer arithmetic}
\label{alg:fldr}
\begin{algorithmic}[1]
\footnotesize
\Require Positive integers $(a_1, \dots, a_n)$, $m \defas \sum_{i=1}^{n}a_i$.
\Ensure Random integer $i$ with probability $a_i / m$.
\Statex \texttt{// PREPROCESS}
\State $k \gets \ceil{\log(m)}$;
  \label{algline:fldr-preprocess-start}
\State $a_{n+1} \gets 2^{k} - m$;
\State \Initialize\ $h\ \mathbf{int}[k]$;
\State \Initialize\ $H\ \mathbf{int}[n+1][k]$;
\For{$j=0, \dots, k-1$}
    \State $d \gets 0$;
  \For{$i=1,\dots,n+1$}
    \State $\mbox{\bfseries bool }
      w \gets (a_i >> (k-1)-j))\; \mathsf{\&}\; 1$;
    \State $h[j] \gets h[j] + w$;
      \If{$w$}
        \State $H[d,j] \gets i$;
        \State $d \gets d + 1$;
    \EndIf
  \EndFor
\EndFor \label{algline:fldr-preprocess-end}
\Statex \texttt{// SAMPLE}
\State $d \gets 0, c \gets 0$; \label{algline:loop-init}
\While{true}
  \State $b \sim \textsc{Flip}()$;
  \State $d \gets 2\cdot d + (1-b)$;
  \If{$d < h[c]$}
    \If{$H[d,c] \le n$}
      \State \Return $H[d,c]$;
    \Else\; \{$d \gets 0$; $c \gets 0$;\}
    \EndIf
  \Else\; \{$d \gets d - h[c]$; $c \gets c+1$;\}
  \EndIf
\EndWhile \label{algline:loop-end}
\end{algorithmic}
\end{algorithm}

\begin{figure}[t]
\centering
\includegraphics[width=\linewidth]{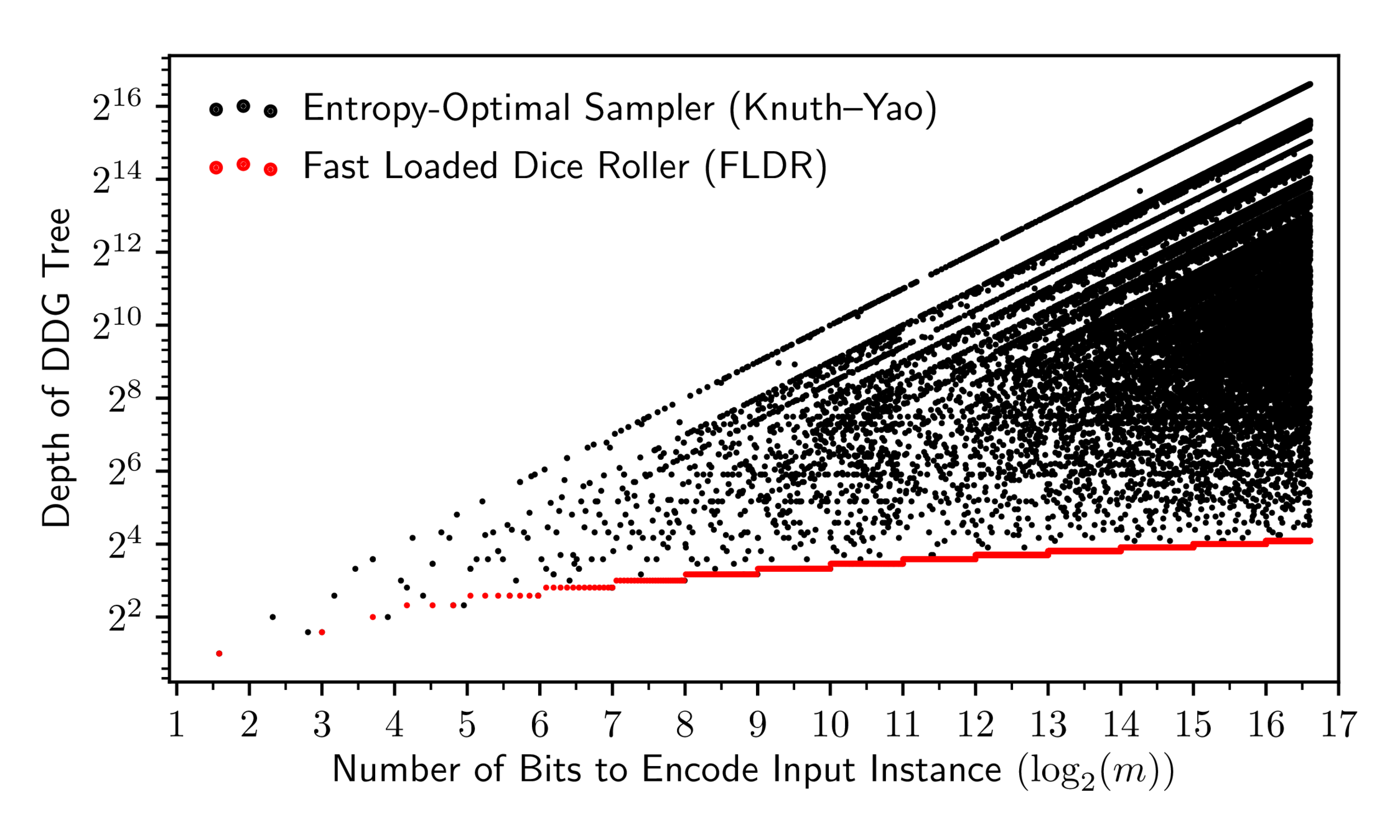}
\captionsetup{size=footnotesize, skip=0pt}
\caption{
  Depth of DDG tree for a distribution having an entry $1/m$,
    using the \citeauthor{knuth1976} entropy-optimal sampler (black) and
    FLDR (red) for $m = 3, \dots, 10^5$ (computed analytically).
  The y-axis is on a logarithmic scale:
    the entropy-optimal sampler scales exponentially
    (Thm.~\ref{thm:depth-perfect-sampling-prime}) and
    FLDR scales linearly
    (Thm.~\ref{thm:rejection-dyadic-ddg-depth}).
}
\label{fig:ddg-depth-bernoulli}
\end{figure}%
\vspace{-.5cm}%
\begin{figure}[t]
\includegraphics[width=\linewidth]{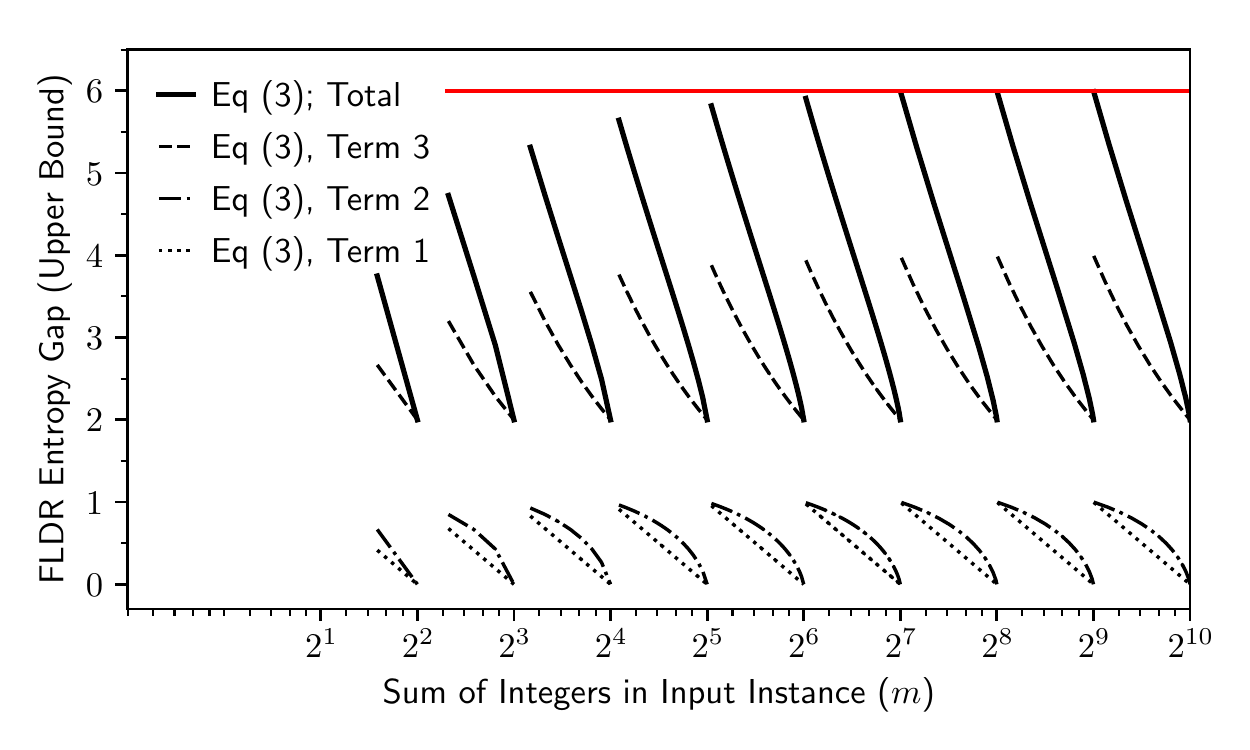}
\captionsetup{size=footnotesize, skip=0pt}
\caption{
  Plot of the three terms in Eq.~\eqref{eq:mushroomlike} in
    the entropy gap (y-axis) from
    Thm.~\ref{thm:rejection-dyadic-ddg-entropy},
    for varying $m$ (x-axis).
  %
}
\label{fig:entropy-gap}
\end{figure}

\begin{figure*}[t]
\centering
\begin{subfigure}{.5\linewidth}
\includegraphics[width=\linewidth]{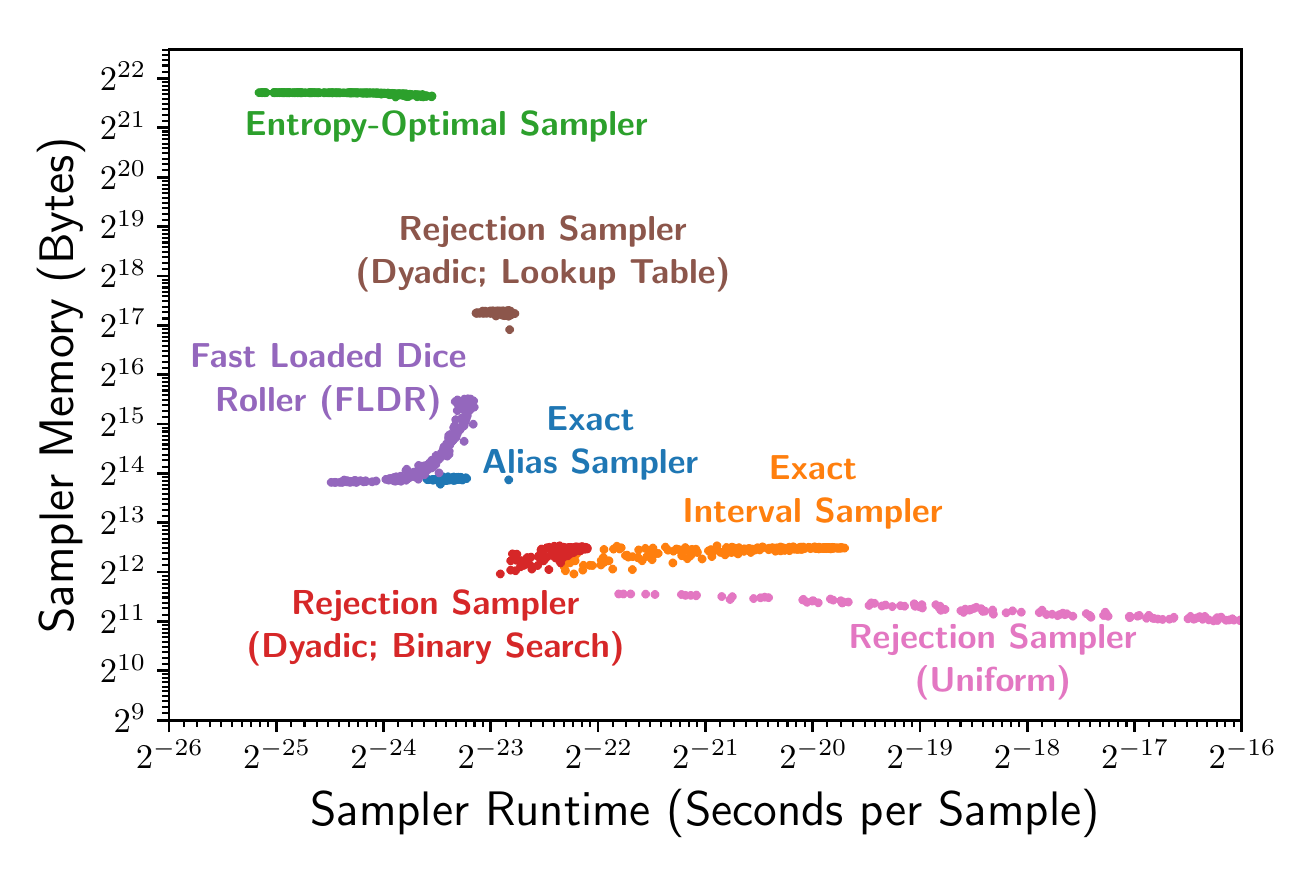}
\captionsetup{size=footnotesize, skip=0pt}
\caption{}
\label{fig:memory-rate-100}
\end{subfigure}%
\begin{subfigure}{.5\linewidth}
\includegraphics[width=\linewidth]{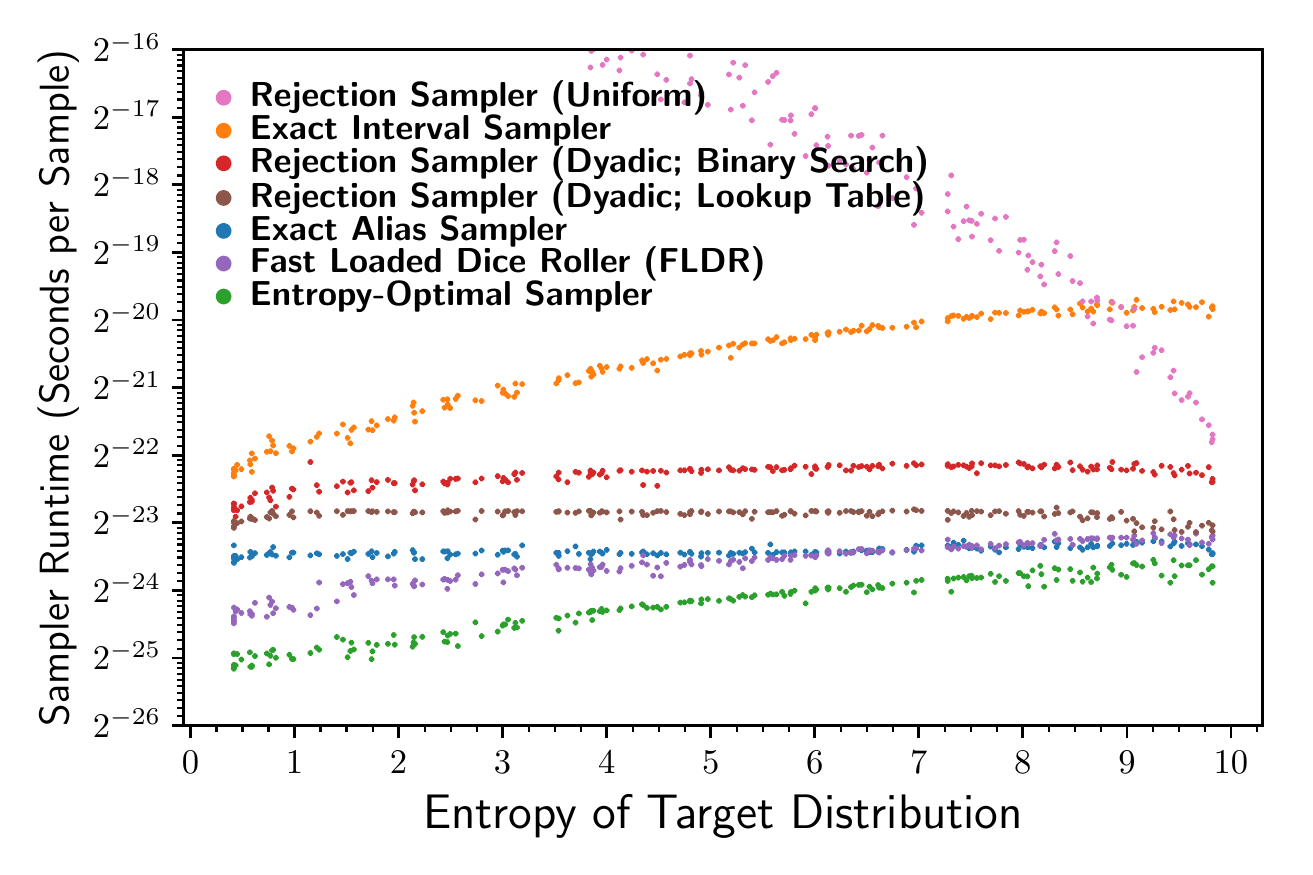}
\captionsetup{size=footnotesize, skip=0pt}
\caption{}
\label{fig:rate-entropy-100}
\end{subfigure}

\captionsetup{size=footnotesize, skip=5pt}
\caption{Comparison of memory and runtime performance for sampling 500
    random frequency distributions over $n\,{=}\,1000$ dimensions with
    sum $m\,{=}\,40000$, using FLDR and six baseline exact
    samplers.
    \subref{fig:memory-rate-100} shows a scatter plot of the sampler runtime
    (x-axis; seconds per sample) versus sampler memory (y-axis;
    bytes); and~\subref{fig:rate-entropy-100} shows how the sampler
    runtime varies with the entropy of the target distribution, for
    each method and each of the 500 distributions.}
\end{figure*}

\section{EMPIRICAL EVALUATION}
\label{sec:empirical-measurements}

We next empirically evaluate the memory, runtime, preprocessing, and
  entropy properties of the Fast Loaded Dice Roller from
  Section~\ref{sec:fast-loaded-dice-roller} and compare them to the
  following six baseline algorithms which, like FLDR, all produce
  exact samples from the target distribution and operate in the random
  bit model:
\begin{enumerate}[label=(\roman*),itemsep=0pt]
  \item entropy-optimal sampler~\citep{knuth1976},
  using a variant of Alg.~\ref{alg:fldr} (lines
  \ref{algline:loop-init}--\ref{algline:loop-end});
  \item rejection sampler with uniform proposal
  (Alg.~\ref{alg:rejection-uniform});
  \item rejection sampler with dyadic proposal~\citep{devroye1986},
  using a lookup table
  (Alg.~\ref{alg:rejection-dyadic-lookup-table});
  \item rejection sampler with dyadic proposal~\citep{devroye1986},
  using binary search
  (Alg.~\ref{alg:rejection-dyadic-binary-search});
  \item exact interval sampler~\citep{han1997}, using
  Alg.~1 of \citet{devroye2015};
  \item exact alias sampler~\citep{walker1977},
  using entropy-optimal uniform and Bernoulli sampling
  \citep{lumbroso2013} and the one-table implementation~\citep{vose1991}.
\end{enumerate}
All algorithms were implemented in C and compiled with \texttt{gcc}
  level 3 optimizations, using Ubuntu 16.04 on AMD Opteron 6376 1.4GHz
  processors.\footnote{All experiments in this section use target
  distributions with integer weights. We note that the
  reference implementations of FLDR in C and Python additionally
  contain preprocessing algorithms for exact sampling given \mbox{IEEE 754}
  floating-point weights. All samplers and experiments are at
  \url{https://github.com/probcomp/fast-loaded-dice-roller}.}

\subsection{Sampler Memory and Runtime}
\label{subec:empirical-measurements-exact}

We defined 100 frequency distributions $(a_1,\dots,a_n)$ over $n=100$
  dimensions which sum to $m=40000$, randomly chosen with entropies equally spaced
  from $0.034$ to $6.64 \approx \log{100}$ bits.
For each sampling algorithm and each distribution, we measured
  \begin{enumerate*}[label=(\roman*)]
  \item the size of the data structure created during
    preprocessing; and
  \item the wall-clock time taken to generate one million random
  samples.
  \end{enumerate*}
Fig.~\ref{fig:memory-rate-100} shows a scatter plot of the sampler
  memory (y-axis, in bytes) and sampler runtime (x-axis, in seconds
  per sample) for each algorithm and for each of the 100 distributions
  in the benchmark set, and Fig.~\ref{fig:rate-entropy-100} shows a
  scatter plot of the sampler runtime (y-axis, in seconds per
  sample) with the entropy of that target distribution (x-axis, in
  bits).

The runtime of FLDR (purple) most closely follows the runtime of the
  optimal sampler (green), while using up to $16000$x less
  memory---the memory improvement of FLDR grows at an exponential rate as $m$
  increases (Fig.~\ref{def:ddg-depth}).
In addition, for low-entropy distributions (bottom-left part of purple curve),
  FLDR uses even less memory than
  the linear bound from Thm.~\ref{thm:rejection-dyadic-ddg-depth}.

The lookup table rejection sampler (brown) uses up to $256$x more memory
  and is up to $4$x slower than FLDR, since it
  draws a constant $k=16$ bits/sample and uses a large size-$m$
  table---the memory improvement of FLDR again grows at an exponential rate as $m$
  increases.
The binary search rejection sampler (red) uses up to $32$x less
  than FLDR since it only stores running sums, but has up to $16$x slower
  runtime due to the cost of binary search---this
  runtime factor grows at a logarithmic rate as $n$ increases.
Rejection sampling with a uniform proposal (pink) performs poorly at
  low-entropy distributions (many rejections) and moderately at higher
  entropies where the target distribution is more uniform.

It is worthwhile to note that the \citeauthor{han1997} interval
  sampler (orange) has a tighter theoretical upper bound on entropy
  gap than FLDR ($3$ bits versus $6$ bits).
However, FLDR is up to 16x faster in our experiments,
  since we can directly simulate the underlying
  DDG tree using Alg.~\ref{alg:fldr}.
In contrast, implementations of the interval sampler in the
  literature for unbiased
  sources do not sample the underlying DDG tree, instead using
  expensive integer divisions and binary search in the main
  sampling loop~\citep{han1997,uyematsu2003,devroye2015}.
In addition, the array on which binary search is performed changes
  dynamically over the course of sampling.
To the best of our knowledge, unlike with FLDR, there is no
  existing implementation of interval sampling that directly
  simulates the underlying DDG tree so as to fully leverage its
  entropy efficiency.

The alias method (blue) is the most competitive baseline, which is up
  to 2x slower than FLDR (at low entropies) while using
  between 1x (at low-entropy distributions) and 8x less memory (at
  high entropies) to store the alias table.
While the alias method is commonly said to require constant runtime, this
  analysis only holds in the real RAM model and typical floating-point
  implementations of the alias method have non-zero sampling error.
For producing exact samples in the random bit model, the alias method
  requires (on average) between $\log{n}$ and $\log{n}+1$ bits to
  sample a uniform over $\set{1,\dots,n}$
  and two bits to sample a Bernoulli, which gives a
  total of $\log{n} + 3$ bits/sample, independently of $H(p)$
  (horizontal blue line in Fig.~\ref{fig:rate-entropy-100}).
In contrast, FLDR requires at most $H(p) + 6$ bits on average,
  which is less than alias sampling whenever
  $H(p) \ll \log{n}$.
For fixed $n$, the constant rate of the alias sampler corresponds to
  the ``worst-case'' runtime of FLDR: in
  Fig.~\ref{fig:rate-entropy-100}, the gap between purple (FLDR) and
  blue (alias) curves is largest at lower entropies and narrows as
  $H(p)$ increases.

\begin{figure}[t]
\includegraphics[width=\linewidth]{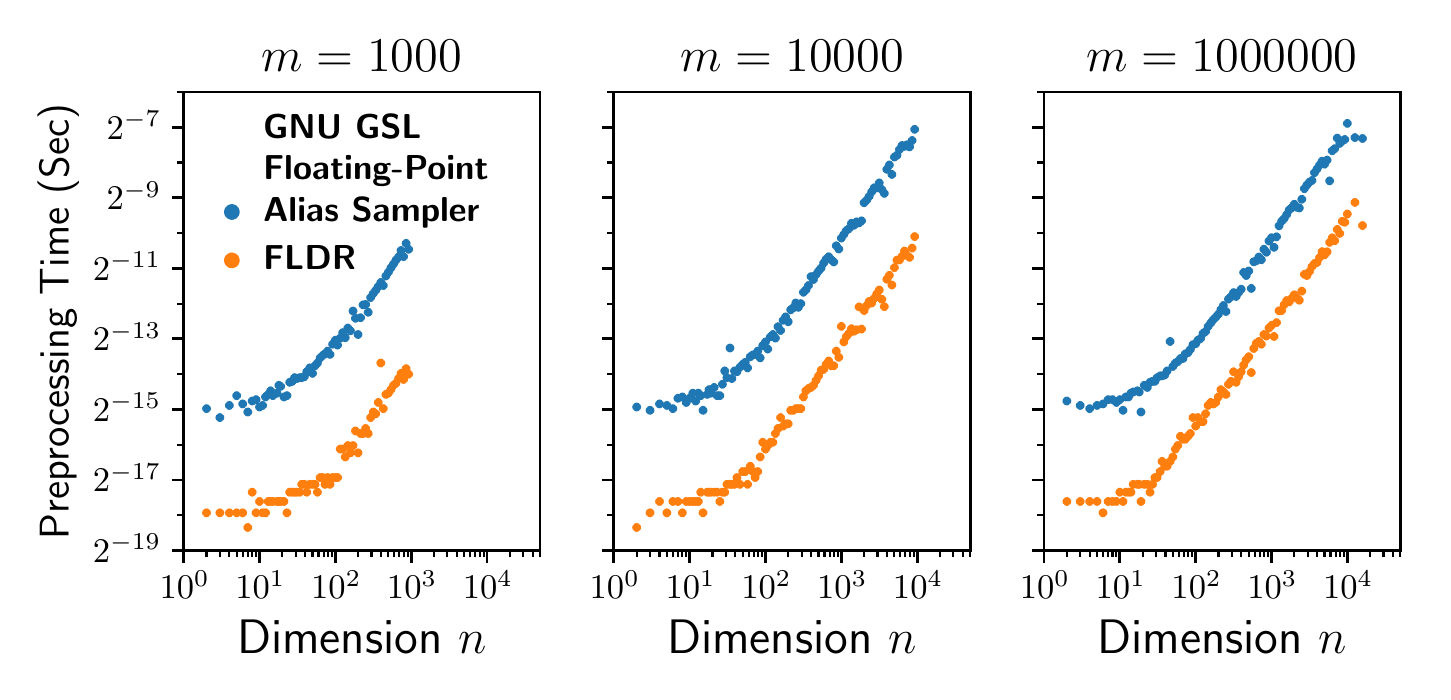}
\captionsetup{size=footnotesize, skip=0pt}
\caption{Comparison of the preprocessing times (y-axes; wall-clock seconds)
    of FLDR with those of the alias sampler, for distributions
    with dimension ranging from $n = 10^0, \dots 2\times10^4$ (x-axes) and
    normalizers $m=1000$, $10000$, and $1000000$ (left, center, and
    right panels, respectively).}
\label{fig:dimension-preprocess}
\end{figure}

\subsection{Preprocessing Time}
\label{subsec:empirical-measurements-preprocessing}

We next compared the preprocessing time of FLDR
  (Alg.~\ref{alg:fldr},
  lines~\ref{algline:fldr-preprocess-start}--\ref{algline:fldr-preprocess-end})
  for varying $(n,m)$ with that of the alias
  sampler~\citep{walker1977}, which is the most competitive baseline method.
To measure the preprocessing time of the alias method, we used the
  open-source implementation in the C GNU Scientific Library (GSL)%
  \footnote{The \texttt{gsl\_ran\_discrete\_preproc}
  function from the \texttt{gsl\_randist} GSL library implements
  the $O(n)$ alias table preprocessing algorithm from \citet{vose1991}.}.
Fig.~\ref{fig:dimension-preprocess} shows a log-log plot of the
  preprocessing time (y-axis; wall-clock seconds) and dimensions
  (x-axis; $n$) for distributions with
  $m=1000$, $10000$, $1000000$ (panels left to right).
Our C implementation of FLDR (orange) has a lower preprocessing
  time than the GSL alias sampler (blue) in all these regimes.
Since the matrix $H$ constructed during FLDR preprocessing
  has $n+1$ rows and $\log{m}$ columns, the gap between the
  two curves narrows (at a logarithmic rate) as $m$ increases.
On a 64-bit architecture we may assume that
  $m < 2^{64}$ (i.e., \texttt{unsigned long long} in C) and so
  the $n\log{m} \approx 64n$ preprocessing time of FLDR is highly
  scalable, growing linearly in $n$.

\subsection{Calls to Random Number Generator}
\label{sec:empirical-measurements-prng}

This paper has emphasized exact sampling in the random bit model,
  where the sampling algorithm lazily draws a random bit
  $B\,{\sim}\,\textsc{Flip}$ on demand.
As discussed in Section~\ref{sec:random-bit-model}, most sampling
  algorithms in existing software libraries operate under the real RAM model
  and approximate an ideal uniform variate $U\,{\sim}\,\uniform([0,1])$
  using a high-precision floating-point number.
Floating-point samplers produce non-exact samples---both as $U$ is not exactly
  uniform and as arithmetic operations involving $U$ (such as division) are
  non-exact.
Further, these implementations can be highly wasteful of computation.
(As an illustrative example, sampling a fair coin requires only one random
  bit, but comparing $U < 0.5$ in floating-point
  consumes a full machine word, e.g., 64 pseudo-random bits, to generate $U$.)
Following~\citet{lumbroso2013}, our implementation of $\flip$
  maintains a buffer of 64 pseudo-random bits.
Table~\ref{tab:prng-calls} shows a comparison of the number of calls
  to the pseudo-random number generator (PRNG) and wall-clock time for
  generating $10^6$ samples from 1000-dimensional
  distributions with various entropies,
  using FLDR and floating-point samplers (we have conservatively
  assumed that the latter makes exactly one PRNG call per sample).
The results in Table~\ref{tab:prng-calls} highlight that, by calling
  the PRNG nearly as many times as is information-theoretically
  optimal (Thm.~\ref{thm:rejection-dyadic-ddg-entropy}), FLDR
  spends significantly less time calling the PRNG than do floating-point
  samplers (with the added benefit of producing \mbox{exact samples}).


\begin{table}[t]
\footnotesize
\captionsetup{size=footnotesize, skip=0pt}
\caption{Number of PRNG calls and wall-clock time
  when drawing $10^6$ samples from $n=1000$ dimensional
  distributions, using FLDR \& approximate floating-point samplers.}
\label{tab:prng-calls}
\begin{tabular*}{\linewidth}{@{\extracolsep{\fill}}lrrr}
\toprule
\multirow{2}{*}{Method} & Entropy & Number of & PRNG Wall \\
~ & (bits) & PRNG Calls &  Time (ms) \\
\midrule
\multirow{4}{*}{\shortstack[l]{FLDR}}
  & 1 & 123,607 & 3.69 \\
~ & 3 & 182,839 & 4.27 \\
~ & 5 & 258,786 & 5.66 \\
~ & 7 & 325,781 & 7.90 \\
~ & 9 & 383,138 & 8.68 \\ \midrule
Floating Point &  all & 1,000,000 & 21.51 \\ \bottomrule
\end{tabular*}
\end{table}

\section{CONCLUSION}
\label{sec:conclusion}

This paper has presented the Fast Loaded Dice Roller, a
  new method for generating discrete random variates.
The sampler has near-optimal entropy consumption, uses a linear amount
  of storage, and requires linear setup time.
Due to its theoretical efficiency, ease-of-implementation
  using fast integer arithmetic,  guarantee of generating exact
  samples, and high performance in practice,
  we expect FLDR to be a valuable addition to the suite
  of existing sampling algorithms.

\bibliography{paper}

\clearpage
\appendix


\section{PROOFS}
\label{appx:proofs}

This appendix contains the proofs of the theorems from
Section~\ref{sec:knuth-yao-complexity}, which are adapted from
\citet[Section~3]{saad2020popl} and included here for completeness.

\begin{proposition}[Proposition~\ref{prop:numsys-bases} in main text]
\label{prop:numsys-bases-appx}
For integers $k$ and $l$ with $0 \le l \le k$,
  define $Z_{kl} \defas 2^k - 2^{l}\Indicator_{l < k}$.
Then
  \begin{align*}
  \NumSys{kl} =
  \left\lbrace
    \frac{0}{Z_{kl}},
    \frac{1}{Z_{kl}},
    \dots,
    \frac{Z_{kl}-1}{Z_{kl}},
    \frac{Z_{kl}}{Z_{kl}}\Indicator_{l < k}
  \right\rbrace.
  \end{align*}
\end{proposition}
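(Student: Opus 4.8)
The plan is to determine, for each pair $(k,l)$ with $0 \le l \le k$, exactly which rationals admit a representation $x = (0.x_1\ldots x_l\overline{x_{l+1}\ldots x_k})_2$, and then compare the resulting numerators against $\set{0,1,\dots,Z_{kl}}$. First I would dispatch the degenerate case $l = k$: here there is no repeating block, so $x = (0.x_1\ldots x_k)_2 = N/2^k$ with $N \defas \sum_{i=1}^{k} x_i 2^{k-i}$ ranging over $\set{0,\dots,2^k-1}$ as $(x_1,\dots,x_k)$ ranges over $\set{0,1}^k$; since $Z_{kk} = 2^k$ with $\Indicator_{l<k} = 0$, this already matches the claimed set.

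For the main case $l < k$, I would introduce the integers $P \defas \sum_{i=1}^{l} x_i 2^{l-i} \in \set{0,\dots,2^l-1}$ encoded by the prefix and $S \defas \sum_{i=l+1}^{k} x_i 2^{k-i} \in \set{0,\dots,2^{k-l}-1}$ encoded by one copy of the repeating block. Summing the geometric series $\sum_{t\ge 0} 2^{-t(k-l)} = 2^{k-l}/(2^{k-l}-1)$ gives $x = P/2^l + S/\bigl(2^l(2^{k-l}-1)\bigr)$, and since $2^l(2^{k-l}-1) = 2^k - 2^l = Z_{kl}$, this is $x = \bigl(P(2^{k-l}-1) + S\bigr)/Z_{kl}$. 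So it remains to show that the numerator $N(P,S) \defas P(2^{k-l}-1) + S$ attains exactly the values $\set{0,1,\dots,Z_{kl}}$ as $(P,S)$ ranges over the product set $\set{0,\dots,2^l-1} \times \set{0,\dots,2^{k-l}-1}$.

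I expect this last step to be the crux. The key observation is that for each fixed $P$ the quantity $N(P,S)$ sweeps the contiguous integer block from $P(2^{k-l}-1)$ up to $P(2^{k-l}-1) + (2^{k-l}-1) = (P+1)(2^{k-l}-1)$, so the block for $P$ and the block for $P+1$ meet exactly at their shared endpoint $(P+1)(2^{k-l}-1)$; consequently the union over $P = 0,\dots,2^l-1$ is the full range $\set{0,1,\dots,2^l(2^{k-l}-1)} = \set{0,1,\dots,Z_{kl}}$, with no gaps and nothing above $Z_{kl}$. Dividing through by $Z_{kl}$ then gives $\NumSys{kl} = \set{0/Z_{kl},\dots,Z_{kl}/Z_{kl}}$, and the endpoint $1 = Z_{kl}/Z_{kl}$ is realized precisely by $P = 2^l-1$, $S = 2^{k-l}-1$ (all bits equal to $1$), consistent with the $\Indicator_{l<k}$ factor on the final term. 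One minor bookkeeping point to note in passing is that distinct bit-strings $(x_1,\dots,x_k)$ may encode the same rational; this is harmless, since only the set of attained values matters for the asserted equality.
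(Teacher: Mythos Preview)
Your proof is correct and proceeds along the same lines as the paper's: split off the case $l=k$, then for $l<k$ sum the geometric series to write $x$ as an integer over $Z_{kl}=2^k-2^l$. Your argument is in fact more complete than the paper's, which stops after deriving the formula $x=\bigl((2^{k-l}-1)P+S\bigr)/Z_{kl}$ and leaves the reverse inclusion implicit; your observation that the blocks $\{P(2^{k-l}-1),\dots,(P+1)(2^{k-l}-1)\}$ chain together at their endpoints to cover $\{0,\dots,Z_{kl}\}$ explicitly establishes surjectivity.
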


\begin{proof}
For $l=k$, the number system
  $\NumSys{kl} = \NumSys{kk}$
  is the set of dyadic rationals less than one
  with denominator $Z_{kk} = 2^k$.
For $0 \le l < k$, any $x \in \NumSys{kl}$ when written in
  base $2$ has a (possibly empty)
  non-repeating prefix and a non-empty infinitely repeating
  suffix, so that
  $x$ has binary expansion
  $(0.b_1\dots b_l\overline{s_{l+1}\dots s_k})_2$.
  Now,
  \begin{align*}
  2^l(0.b_1\dots b_l)_2
    = (b_1\dots b_l)_2
    = \textstyle\sum_{i=0}^{l-1}b_{l-i}2^{i}
  \end{align*}
  and
  \begin{align*}
  (2^{k-l}-1)(0.\overline{s_{l+1}\dots s_k})_2
    &= (s_{l+1}\dots s_k)_2 \\
    &= \textstyle\sum_{i=0}^{k-(l+1)}s_{k-i}2^{i}
  \end{align*}
  together imply that
  \begin{align*}
  x &= (0.b_1\dots b_l)_2 + 2^{-l}(0.\overline{s_{l+1}\dots s_k})_2 \\
    &=\frac
      {(2^{k-l}-1)\sum_{i=0}^{l-1}b_{l-i}2^{i} + \sum_{i=0}^{k-(l+1)}s_{k-i}2^{i}}
      {2^{k}-2^{l}}.
      \\[-\normalbaselineskip]\tag*{\qedhere}
  \end{align*}
\end{proof}

\begin{remark}
\label{remark:encoded-k-plus-one}
When $0 \le l \le k$, we have $\NumSys{kl} \subseteq \NumSys{k+1,l+1}$,
  since if $x \in \NumSys{kl}$
  then Proposition~\ref{prop:numsys-bases-appx}
  furnishes an integer $c$ such that
  $x = c/(2^k-2^l\Indicator_{l<k})
    = 2c/(2^{k+1}-2^{l+1}\Indicator_{l<k})
    \in \NumSys{k+1,l+1}$.
Further, for $k\ge 2$, we have
  $\NumSys{k,k-1} \setminus \set{1} = \NumSys{k-1,k-1} \subseteq \NumSys{kk}$,
  since any repeating suffix with exactly one
  digit can be folded into the prefix (except when the prefix
  and suffix are all ones).
\end{remark}

\begin{theorem}[Theorem~\ref{thm:k-bit-bases} in main text]
\label{thm:k-bit-bases-appx}
Let $T$ be an entropy-optimal DDG tree with a non-degenerate
  output distribution $(p_i)_{i=1}^{n}$
  for $n > 1$.
The depth of $T$ is the smallest integer $k$ such that there exists
  an integer $l \in \set{0,\dots,k}$ for which all the $p_i$
  are integer multiples of $1/Z_{kl}$ (hence in $\NumSys{kl}$).
\end{theorem}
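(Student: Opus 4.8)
The statement to prove is Theorem~\ref{thm:k-bit-bases-appx}: the depth of an entropy-optimal DDG tree $T$ with non-degenerate output $(p_i)$ equals the smallest $k$ admitting an $l\in\{0,\dots,k\}$ with every $p_i\in\NumSys{kl}$. By Theorem~\ref{thm:ddg-knuth-yao}, the entropy-optimal tree is determined by the binary expansions of the $p_i$: leaf $i$ occurs at level $j$ iff the $j$th bit $p_{ij}$ of $p_i$ is $1$. So the plan is to translate "depth of $T$ is $k$" — i.e., the shortest finite back-edge encoding has longest root-to-leaf path of length $k$ (Definition~\ref{def:ddg-depth}) — into a purely arithmetic statement about the joint binary expansions of the $p_i$, and then match that against membership in $\NumSys{kl}$ via Proposition~\ref{prop:numsys-bases-appx}.

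First I would observe that a finite encoding of the optimal tree with a back-edge from level $k-1$ to level $l$ exists precisely when every $p_i$ has a binary expansion that is eventually periodic with a prefix of length $\le l$ and a period dividing $k-l$ — equivalently, when each $p_i$ can be written in the form $(0.b_1\dots b_l\overline{s_{l+1}\dots s_k})_2$. By Proposition~\ref{prop:numsys-bases-appx} this is exactly the condition $p_i\in\NumSys{kl}$, i.e., $p_i$ is an integer multiple of $1/Z_{kl}$. Thus for a fixed pair $(k,l)$, "$T$ admits an encoding of depth $\le k$ with all back-edges targeting level $l$" $\iff$ "all $p_i\in\NumSys{kl}$." Then I would argue that the depth of $T$ (the minimum over all valid finite encodings of the longest path) is the minimum $k$ over all $(k,l)$ with $0\le l\le k$ for which this holds; the Corollary after Theorem~\ref{thm:k-bit-bases} (all back-edges originate at level $k-1$ and land at a common level $l$) justifies that we only need to consider this one-parameter family of encodings, so no cleverer encoding can do better.

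Two directions then need checking. $(\le)$: if all $p_i\in\NumSys{kl}$ for some $l$, the common denominator $Z_{kl}$ gives each $p_i$ a binary expansion with prefix length $\le l$ and period length $\le k-l$; padding prefixes and repeating the period as needed, all expansions align to the same $(l,k)$ shape, yielding a finite tree encoding with longest path $k$, so $\operatorname{depth}(T)\le k$. (One must handle the dyadic case $l=k$, where $Z_{kk}=2^k$ and there is no back-edge, and the edge case where a $p_i$ is $1$, ruled out by non-degeneracy — here the hypothesis $n>1$ and "non-degenerate" are what I would lean on, perhaps invoking Remark~\ref{remark:encoded-k-plus-one} to normalize borderline suffixes.) $(\ge)$: conversely, given that $\operatorname{depth}(T)=k$, the optimal encoding has (by the Corollary) all back-edges from level $k-1$ to some single level $l<k-1$, or is dyadic of depth $k$; reading off the expansions shows every $p_i$ has the $(l,k)$ fixed-point form, hence lies in $\NumSys{kl}$ by Proposition~\ref{prop:numsys-bases-appx}. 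Combining, $k$ is the least integer admitting such an $l$.

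**Main obstacle.** The delicate part is not the arithmetic but pinning down the definition of "depth of a DDG tree" operationally: one must argue that the *shortest* finite back-edge encoding of the optimal tree is achieved by an encoding all of whose back-edges go from level $k-1$ to a common level $l$, so that minimizing the longest path reduces to minimizing $k$ over the family $\NumSys{kl}$. The Corollary stated in the excerpt gives this structural fact for entropy-optimal trees, so I would cite it directly; without it one would have to re-derive that the Knuth–Yao tree's back-edges cannot be "staggered" in a way that shortens the encoding. The second subtlety is bookkeeping around the boundary: aligning binary expansions of different periods to a common period $k-l$ (taking a common multiple, then noting $\NumSys{kl}\subseteq\NumSys{k',l'}$-type containments from Remark~\ref{remark:encoded-k-plus-one} to see that enlarging to a common shape does not increase the relevant $k$ beyond necessity), and the all-ones suffix exception noted in that remark. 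These are routine once the structural reduction is in place.
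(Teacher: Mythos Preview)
Your plan has a circularity: you invoke the Corollary following Theorem~\ref{thm:k-bit-bases} (that every back-edge in an entropy-optimal depth-$k$ tree originates at level $k-1$ and lands at a common level $l$) to conclude that the minimal encoding has the single-$(k,l)$ shape you need. But in the paper that Corollary is stated as a \emph{consequence} of Theorem~\ref{thm:k-bit-bases}, not as an independent lemma, so citing it here begs the question. You correctly flag this structural fact as ``the delicate part,'' yet you then sidestep it rather than prove it.

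The substantive content you are missing is exactly what the paper's proof supplies: an \emph{arithmetic} argument, not a tree-structural one, that the $p_i$ must share a common split point $l$. For $n=2$, the paper writes $p_1=a/c$ and $p_2=b/d$ in lowest terms with $c=Z_{k,l_1}$ and $d=Z_{k,l_2}$, and then uses the constraint $p_1+p_2=1$ to obtain $ad+bc=cd$; if $c\ne d$ this forces a nontrivial common factor in one of the reduced fractions, a contradiction, while $c=d$ forces $l_1=l_2$ (the boundary case $\{l_1,l_2\}=\{k-1,k\}$ is dispatched via Remark~\ref{remark:encoded-k-plus-one} and non-degeneracy). This divisibility step is the heart of the proof and is what actually establishes the common $l$ that your argument assumes. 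Your $(\le)$ direction is fine in outline, but for $(\ge)$ you must supply an argument of this kind rather than lean on the Corollary.
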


\begin{proof}
Suppose that $T$ is an entropy-optimal DDG tree and
  let $k$ be its depth (note that $k\ge 1$, as $k=0$ implies $p$ is
  degenerate).
Assume $n=2$.
From Theorem~\ref{thm:ddg-knuth-yao}, for each $i=1,2$, the
  probability $p_i$ is a rational number where the number of digits in
  the shortest prefix and suffix of the binary expansion
  (which ends in $\bar{0}$ if dyadic) is at most $k$.
Therefore, we can express the probabilities $p_1, p_2$ in terms of
  their binary expansions as
  \begin{align*}
  p_1 &= (0.b_1\dots b_{l_1}\overline{s_{l_1+1}\dots s_{k}})_2, \\
  p_2 &= (0.w_1\dots w_{l_2}\overline{u_{l_2+1}\dots u_{k}})_2,
  \end{align*}
where $l_i$ and $k-l_i$ are the number of digits in the
  shortest prefix and suffix, respectively, of the binary expansions
  of each $p_i$.

If $l_1 = l_2$ then the conclusion follows from
  Proposition~\ref{prop:numsys-bases-appx}.
If $l_1 = k-1$ and $l_2 = k$ then the conclusion
  follows from Remark~\ref{remark:encoded-k-plus-one}
  and the fact that $p_1\ne 1$, $p_2 \ne 1$.
Now, from Proposition~\ref{prop:numsys-bases-appx}, it suffices to establish
  that $l_1 = l_2 \asdef l$, so that $p_1$ and $p_2$ are both integer
  multiples of $1/Z_{kl}$.
Suppose for a contradiction that $l_1 < l_2$ and $l_1 \ne k-1$.
Write
  $p_1 = a/c$
and
  $p_2 = b/d$
where each summand is in reduced form.
By Proposition~\ref{prop:numsys-bases-appx}, we have
  $c = 2^k - 2^{l_1}$
  and
  $d = 2^k - 2^{l_2}\Indicator_{l_2<k}$.
Then as
  $p_1 + p_2 = 1$
we have
  $ad + bc = cd$.
If $c \neq d$ then either $b$ has a positive factor in common with
  $d$ or $a$ with $c$, contradicting the summands being in reduced form.
But $c=d$ contradicts $l_1 < l_2$.
%

The case where $n > 2$ is a straightforward extension of this argument.
\end{proof}

\begin{theorem}[Theorem~\ref{thm:depth-perfect-sampling} in main text]
\label{thm:depth-perfect-sampling-appx}
Suppose $p$ is defined by $p_i = a_i/m$ $(i=1,\dots,n)$,
  where $\sum_{i=1}^{n}a_i=m$.
The depth of any entropy-optimal sampler for
  $p$ is at most $m-1$.
\end{theorem}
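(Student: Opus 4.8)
The plan is to reduce the claim to Theorem~\ref{thm:k-bit-bases}, which says that the depth of the entropy-optimal tree for a non-degenerate distribution $(p_i)$ is the \emph{smallest} $k$ for which there is an $l\in\set{0,\dots,k}$ with every $p_i$ an integer multiple of $1/Z_{kl}$, where $Z_{kl}\defas 2^k-2^l\Indicator_{l<k}$. (Here $p$ is non-degenerate because $n>1$ and the $a_i$ are positive force $a_j\le m-1<m$, so each $p_j<1$.) Hence it suffices to exhibit \emph{one} pair $(k,l)$ with $0\le l\le k\le m-1$ and $m\mid Z_{kl}$: then $Z_{kl}/m$ is an integer and $a_i/m = a_i(Z_{kl}/m)/Z_{kl}$ is an integer multiple of $1/Z_{kl}$ for every $i$, so the true depth, being the minimum such $k$, is at most $m-1$.

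To construct such a pair I would write $m=2^s t$ with $t$ odd. If $t=1$ (i.e.\ $m$ is a power of two), take $k=l=s$, so $Z_{ss}=2^s=m$, and note $s\le 2^s-1=m-1$. Otherwise $t\ge 3$; let $d$ be the multiplicative order of $2$ modulo $t$ (well defined since $\gcd(2,t)=1$) and set $l\defas s$, $k\defas s+d$. Then $Z_{kl}=2^{s+d}-2^s=2^s(2^d-1)$, and since $t\mid 2^d-1$ by definition of $d$, we get $m=2^s t\mid 2^s(2^d-1)=Z_{kl}$, as required.

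The remaining step, which I expect to be the only real point of friction, is verifying $k=s+d\le m-1$. Euler's theorem gives $d\mid\varphi(t)$, hence $d\le\varphi(t)\le t-1$, so $k\le s+t-1$; and $s+t-1\le 2^s t-1=m-1$ reduces to the elementary inequality $s\le(2^s-1)t$, which holds because $2^s-1\ge s$ for all $s\ge 0$ and $t\ge 1$. This closes the argument. I would also remark that the same construction foreshadows the sharper results that follow: when $t=m$ is prime and $2$ is a primitive root modulo $m$, one has $d=\mathrm{ord}_m(2)=m-1$, which is exactly the extremal case isolated in Theorem~\ref{thm:depth-perfect-sampling-prime} and Remark~\ref{remark:Artin}.
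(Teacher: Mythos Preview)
Your proof is correct and follows essentially the same approach as the paper: reduce via Theorem~\ref{thm:k-bit-bases} to exhibiting a pair $(k,l)$ with $k\le m-1$ and $m\mid Z_{kl}$, then manufacture such a pair from Euler's theorem applied to the odd part of $m$. The only cosmetic differences are that you use the multiplicative order of $2$ modulo $t$ (yielding the sharper $k=s+d$) where the paper is content to use $\phi$ directly and splits into odd/even cases, and that you handle the power-of-two case via $l=k$; your packaging also makes the connection to Theorem~\ref{thm:depth-perfect-sampling-prime} transparent.
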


\begin{proof}
By Theorem~\ref{thm:k-bit-bases},
  it suffices to find integers $k \le m-1$ and $l \le k$ such that
  $Z_{kl}$ is a multiple of $m$, which in turn implies that any
  entropy-optimal sampler for $p$ has a maximum depth of $m-1$.
\begin{enumerate}[
  label={Case \arabic*:},
  wide,
  ]

\item $Z$ is odd.
\label{case:trustlessness}
  Consider $k = m-1$.
  We will show that $m$ divides $2^{m-1} - 2^{l}$ for some $l$ such
    $0 \le l \le m-2$.
  Let $\phi$ be Euler's totient function,
    which satisfies $1 \le \phi(m) \le m-1 = k$.
  Then $2^{\phi(m)} \equiv 1 \pmod{m}$ as $\mathrm{gcd}(m,2)=1$.
  Put $l = m - 1 - \phi(m)$ and conclude
    that $m$ divides $2^{m-1} - 2^{m-1-\phi(m)}$.

\item $m$ is even. Let $t\ge1$ be the maximal power of $2$ dividing $m$,
  and write $m = m'2^t$.
  Consider $k = m'-1+t$ and $l = j + t$
    where $j = (m'-1) - \phi(m')$.
  As in the previous case applied to $m'$, we have that
    $m'$ divides $2^{m'-1} - 2^{j}$, and so $m$ divides
  $2^k - 2^l$.
  We have $0 \le l \le k$ as $1 \le \phi(m) \le m-1$.
  Finally, $k = m'+t - 1 \le m'2^t - 1 = m - 1$ as $t < 2^t$.
  \qedhere
\end{enumerate}
\end{proof}

\begin{theorem}[Theorem~\ref{thm:depth-perfect-sampling-prime} in main text]
\label{thm:depth-perfect-sampling-prime-appx}
Let $p$ be as in Theorem~\ref{thm:depth-perfect-sampling-appx}.
If $m$ is prime and 2 is a primitive root modulo $m$,
  then the depth of an entropy-optimal DDG tree for
  $p$ is $m-1$.
\end{theorem}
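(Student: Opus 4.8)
The plan is to combine the upper bound already in hand with a matching lower bound. By Theorem~\ref{thm:depth-perfect-sampling-appx} the depth of any entropy-optimal sampler for $p$ is at most $m-1$, so it suffices to prove that the depth is at least $m-1$. For this I would appeal to Theorem~\ref{thm:k-bit-bases}: since $n>1$ and $p$ is non-degenerate (so that theorem applies), the depth is the least integer $k$ for which there is an $l\in\set{0,\dots,k}$ such that every $p_i$ is an integer multiple of $1/Z_{kl}$, where $Z_{kl}=2^k-2^l\Indicator_{l<k}$.

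The first step is to reduce the condition ``every $p_i$ lies in $\NumSys{kl}$'' to a single divisibility condition on $Z_{kl}$. Because $m$ is prime and $0<a_i<m$ for every $i$ (non-degeneracy with $n>1$), each $a_i$ is coprime to $m$. Hence $p_i=a_i/m$ is an integer multiple of $1/Z_{kl}$ iff $m\mid Z_{kl}a_i$, which by $\gcd(a_i,m)=1$ holds iff $m\mid Z_{kl}$. So the depth of $T$ equals the least $k$ for which $m\mid Z_{kl}$ for some $0\le l\le k$.

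The second step analyzes when $m\mid Z_{kl}$. Since $2$ is a primitive root modulo $m$ we have $\gcd(2,m)=1$, so $m$ is odd; the case $l=k$ is therefore impossible, as it would require $m\mid 2^k$. For $l<k$ we have $Z_{kl}=2^l(2^{k-l}-1)$ with $\gcd(m,2)=1$, so $m\mid Z_{kl}$ iff $2^{k-l}\equiv 1\pmod m$, i.e.\ iff the multiplicative order of $2$ modulo $m$ divides $k-l$. That order is $\phi(m)=m-1$, so $m\mid Z_{kl}$ forces $m-1\mid k-l$, and since $l\ge 0$ this gives $k\ge m-1$. Conversely $m\mid 2^{m-1}-1=Z_{m-1,0}$ by Fermat's little theorem, so the value $k=m-1$ is attained; hence the depth is exactly $m-1$.

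The argument is short, and the only place that requires care is the reduction in the first step: one must verify that, thanks to $m$ being prime and the distribution non-degenerate, requiring the $p_i$ to lie in $\NumSys{kl}$ is genuinely equivalent to $m\mid Z_{kl}$, rather than to the a priori weaker family of conditions $m\mid Z_{kl}a_i$. The order-of-$2$ computation and the disposal of the $l=k$ edge case are routine given that $2$ is a primitive root modulo the odd prime $m$.
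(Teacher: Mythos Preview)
Your proof is correct and follows essentially the same route as the paper: reduce via Theorem~\ref{thm:k-bit-bases} to finding the least $k$ with $m\mid Z_{kl}$ for some $l$, dispose of $l=k$ by oddness of $m$, and for $l<k$ use that the order of $2$ modulo $m$ is $m-1$. You are in fact more careful than the paper in making the first reduction explicit (showing that ``all $p_i$ are multiples of $1/Z_{kl}$'' is equivalent to $m\mid Z_{kl}$ via primality of $m$ and $0<a_i<m$), a step the paper's proof takes for granted.
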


\begin{proof}
Since $2$ is a primitive root modulo $m$,
  the smallest integer $a$ for
  which $2^{a} - 1 \equiv 0 \pmod{m}$
  is precisely $\phi(m) = m-1$.
We will show that for any $k' < m-1$ there is no exact
  entropy-optimal sampler that uses $k'$ bits of precision.
By Theorem~\ref{thm:depth-perfect-sampling-appx}, if there were such a
  sampler, then $Z_{k' l}$ must be a multiple of $m$ for some
  $l \le k'$.
If $l < k'$, then $Z_{k' l} = 2^{k'}- 2^l$. Hence
  $2^{k'} \equiv 2^l \pmod{m}$ and so $2^{k'-l} \equiv 1 \pmod{m}$
  as $m$ is odd.
  But $k' < m-1 = \phi(m)$,
  contradicting the assumption that $2$ is a primitive root modulo $m$.
If $l = k'$, then $Z_{k' l} = 2^{k'}$,
  which is not divisible by $m$ since we have assumed that $m$ is odd
  (as $2$ is not a primitive root modulo $2$).
\end{proof}

\end{document}